\title{Instruction Sequences\\ for the Production of Processes%
       \thanks{This research has been carried out as part of the project
               \emph{Thread Algebra for Strategic Interleaving}, which
               is funded by the Netherlands Organisation for Scientific
               Research (NWO).}}
\author{J.A. Bergstra \and C.A. Middelburg}
\institute{Programming Research Group, University of Amsterdam, \\
           Kruislaan~403, 1098~SJ~Amsterdam, the Netherlands \\
           \email{J.A.Bergstra@uva.nl,C.A.Middelburg@uva.nl}}
\begin{document}

\maketitle

\begin{abstract}
Single-pass instruction sequences under execution are considered to
produce behaviours to be controlled by some execution environment.
Threads as considered in thread algebra model such behaviours: upon each
action performed by a thread, a reply from its execution environment
determines how the thread proceeds.
Threads in turn can be looked upon as producing processes as considered
in process algebra.
We show that, by apposite choice of basic instructions, all processes
that can only be in a finite number of states can be produced by
single-pass instruction sequences.
\begin{keywords}
single-pass instruction sequence, process extraction,
program algebra, thread algebra, process algebra.
\end{keywords}%
\begin{classcode}
D.1.4, F.1.1, F.1.2, F.3.2.
\end{classcode}
\end{abstract}

\section{Introduction}
\label{sect-intro}

With the current paper, we carry on the line of research
with which a start was made in~\cite{BL02a}.
The working hypothesis of this line of research is that single-pass
instruction sequence is a central notion of computer science which
merits investigation for its own sake.
We take program algebra~\cite{BL02a} for the basis of our investigation.
Program algebra is a setting suited for investigating single-pass
instruction sequences.
It does not provide a notation for programs that is intended for actual
programming.

The starting-point of program algebra is the perception of a program as
a single-pass instruction sequence, i.e.\ a finite or infinite sequence
of instructions of which each instruction is executed at most once and
can be dropped after it has been executed or jumped over.
This perception is simple, appealing, and links up with practice.
Single-pass instruction sequences under execution are considered to
produce behaviours to be controlled by some execution environment.
Threads as considered in basic thread algebra~\cite{BL02a} model such
behaviours: upon each action performed by a thread, a reply from the
execution environment determines how the thread proceeds.%
\footnote
{In~\cite{BL02a}, basic thread algebra is introduced under the name
 basic polarized process algebra.
}
Threads in turn can be looked upon as producing processes as considered
in process algebras such as \ACP~\cite{BW90,Fok00} and CCS~\cite{Mil89}.
This means that single-pass instruction sequences under execution can be
considered to produce such processes.

Process algebra is considered relevant to computer science, as is
witnesses by the extent of the work on process algebra in theoretical
computer science.
This means that there must be programmed systems whose behaviours are
taken for processes as considered in process algebra.
This has motivated us to investigate the connections between programs
and the processes that they produce.
In this paper, we investigate those connections starting from the
perception of a program as a single-pass instruction sequence.

Regular threads are threads that can only be in a finite number of
states.
The behaviours of all single-pass instruction sequences considered in
program algebra are regular threads and all regular threads can be
produced by such single-pass instruction sequences.
Regular processes are processes that can only be in a finite number of
states.
We show in this paper that, by apposite choice of basic instructions,
all regular processes can be produced by such single-pass instruction
sequences as well.

To obtain this result naturally, we use single-pass instruction
sequences with multiple-reply test instructions, which are more general
than the test instructions considered in program algebra, and threads
with postconditional switching, which is more general than the
behavioural counterpart of test instructions considered in basic thread
algebra.
We show that the result can also be obtained without introducing
multiple-reply test instructions and postconditional switching if we
assume that the cluster fair abstraction rule (see e.g.~\cite{Fok00})
is valid.

\sloppy
Single-pass instruction sequences under execution, and more generally
threads, may make use of services such as counters, stacks and Turing
tapes.
The use operators introduced in~\cite{BM04c} are concerned with the
effect of services on threads.
An interesting aspect of making use of services is that it may turn a
regular thread into a non-regular thread.
Because non-regular threads produce non-regular processes, this means
that single-pass instruction sequences under execution that make use of
services may produce non-regular processes.
On that account, we add the use operators to basic thread algebra with
postconditional switching and make precise what processes are produced
by threads that make use of services.

Programs written in an assembly language are finite instruction
sequences for which single-pass execution is usually not possible.
However, the instruction set of such a program notation may be such that
all regular processes can as well be produced by programs written in
the program notation.
To illustrate this, we show that all regular processes can be produced
by programs written in a program notation which is close to existing
assembly languages.

This paper is organized as follows.
First, we review program algebra and extend it with multiple-reply test
instructions (Section~\ref{sect-PGA}).
Next, we review basic thread algebra, extend it with postconditional
switching (Section~\ref{sect-BTA}), and use the result to make
mathematically precise what threads are produced by the single-pass
instruction sequences considered in program algebra with multiple-reply
test instructions (Section~\ref{sect-thread-extr}).
Then, we review process algebra (Section~\ref{sect-ACP}) and use it to
make mathematically precise what processes are produced by the threads
considered in basic thread algebra with postconditional switching
(Section~\ref{sect-process-extr}).
After that, we show that all regular processes can be produced by the
single-pass instruction sequences considered in program algebra with
multiple-reply test instructions (Section~\ref{sect-expressiveness}).
Following this, we extend basic thread algebra  with postconditional
switching further to threads that make use of services and make precise
what processes are produced by such threads
(Section~\ref{sect-services}).
After that, we show that all regular processes can also be produced by
programs written in a program notation which is close to existing
assembly languages (Section~\ref{sect-PGLDmr-BR}).
Finally, we make some concluding remarks (Section~\ref{sect-concl}).

\section{Program Algebra with Multiple-Reply Test~Instructions}
\label{sect-PGA}

In this section, we first review \PGA\ (ProGram Algebra) and then extend
it with multiple-reply test instructions.
All regular processes can be produced by single-pass instruction
sequences as considered in \PGA\ extended with multiple-reply test
instructions provided use is made of basic instructions of a particular
kind.
Those basic instructions, which are called process construction
instructions, are also introduced.

\subsection{Program Algebra}
\label{subsect-PGA}

The perception of a program as a single-pass instruction sequence is the
starting-point of \PGA.

In \PGA, it is assumed that a fixed but arbitrary set $\BInstr$ of
\emph{basic instructions} has been given.
\PGA\ has the following \emph{primitive instructions}:
\begin{iteml}
\item
for each $a \in \BInstr$, a \emph{plain basic instruction} $a$;
\item
for each $a \in \BInstr$, a \emph{positive test instruction} $\ptst{a}$;
\item
for each $a \in \BInstr$, a \emph{negative test instruction} $\ntst{a}$;
\item
for each $l \in \Nat$, a \emph{forward jump instruction} $\fjmp{l}$;
\item
a \emph{termination instruction} $\halt$.
\end{iteml}
We write $\PInstr$ for the set of all primitive instructions of \PGA.

The intuition is that the execution of a basic instruction $a$ produces
either $\True$ or $\False$ at its completion.
In the case of a positive test instruction $\ptst{a}$, $a$ is executed
and execution proceeds with the next primitive instruction if $\True$ is
produced.
Otherwise, the next primitive instruction is skipped and execution
proceeds with the primitive instruction following the skipped one.
If there is no next instruction to be executed, deadlock occurs.
In the case of a negative test instruction $\ntst{a}$, the role of the
value produced is reversed.
In the case of a plain basic instruction $a$, execution always proceeds
as if $\True$ is produced.
The effect of a forward jump instruction $\fjmp{l}$ is that execution
proceeds with the $l$-th next instruction.
If $l$ equals $0$ or the $l$-th next instruction does not exist,
deadlock occurs.
The effect of the termination instruction $\halt$ is that execution
terminates.

\PGA\ has the following constants and operators:
\begin{iteml}
\item
for each $u \in \PInstr$, an \emph{instruction} constant $u$\,;
\item
the binary \emph{concatenation} operator $\conc$\,;
\item
the unary \emph{repetition} operator $\rep$\,.
\end{iteml}
We assume that there are infinitely many variables, including $X,Y,Z$.
Terms are built as usual.
We use infix notation for the concatenation operator and postfix
notation for the repetition operator.

A closed \PGA\ term is considered to denote a non-empty, finite or
periodic infinite sequence of primitive instructions.%
\footnote
{A periodic infinite sequence is an infinite sequence with only finitely
 many subsequences.}
Closed \PGA\ terms are considered equal if they denote the same
instruction sequence.
The axioms for instruction sequence equivalence are given in
Table~\ref{axioms-PGA}.%
\begin{table}[!t]
\caption{Axioms of \PGA}
\label{axioms-PGA}
\begin{eqntbl}
\begin{axcol}
(X \conc Y) \conc Z = X \conc (Y \conc Z)              & \axiom{PGA1} \\
(X^n)\rep = X\rep                                      & \axiom{PGA2} \\
X\rep \conc Y = X\rep                                  & \axiom{PGA3} \\
(X \conc Y)\rep = X \conc (Y \conc X)\rep              & \axiom{PGA4}
\end{axcol}
\end{eqntbl}
\end{table}
In this table, $n$ stands for an arbitrary natural number greater than
$0$.
For each \PGA\ term $P$, the term $P^n$ is defined by induction on $n$
as follows: $P^1 = P$ and $P^{n+1} = P \conc P^n$.
The \emph{unfolding} equation $X\rep = X \conc X\rep$ is derivable.
Each closed \PGA\ term is derivably equal to one of the form $P$ or
$P \conc Q\rep$, where $P$ and $Q$ are closed \PGA\ terms in which the
repetition operator does not occur.

Notice that PGA2 is actually an axiom schema.
Par abus de langage, axiom schemas will be called axioms throughout the
paper, with the exception of Section~\ref{sect-process-extr}.

\subsection{Multiple-Reply Test Instructions}
\label{subsect-mr-tests}

We introduce \PGAmr, an extension of \PGA\ with multiple-reply test
instructions.
These additional instructions are like the test instructions of \PGA,
but cover the case where a natural number greater than zero is produced
at the completion of the execution of a basic instruction.

In \PGAmr, like in \PGA, it is assumed that a fixed but arbitrary set
$\BInstr$ of basic instructions has been given.
\PGAmr\ has the primitive instructions of \PGA\ and in addition:
\begin{iteml}
\item
for each $n \in \Natpos$ and $a \in \BInstr$,
a \emph{positive multiple-reply test instruction}
$\pmrtst{n}{a}$\,;%
\footnote
{We write $\Natpos$ for the set $\set{n \in \Nat \where n > 0}$.}
\item
for each $n \in \Natpos$ and $a \in \BInstr$,
a \emph{negative multiple-reply test instruction}
$\nmrtst{n}{a}$\,.\nolinebreak\hsp{.325}
\end{iteml}
We write $\PInstrmr$ for the set of all primitive instructions of
\PGAmr.

The intuition is that the execution of a basic instruction $a$ produces
a natural number greater than zero at its completion.
In the case of a positive multiple-reply test instruction
$\pmrtst{n}{a}$, $a$ is executed and execution proceeds with the $i$-th
next primitive instruction if a natural number $i \leq n$ is produced.
If there is no next instruction to be executed or $i > n$, deadlock
occurs.
In the case of a negative multiple-reply test instruction
$\nmrtst{n}{a}$, execution proceeds with the $n{-}i{+}1$-th next
primitive instruction instead of the $i$-th one if a natural number
$i \leq n$ is produced.

For each $a \in \BInstr$, the instructions $\ptst{a}$ and $\ntst{a}$ are
considered essentially the same as the instructions $\pmrtst{2}{a}$ and
$\nmrtst{2}{a}$, respectively.
For that reason, the reply $\True$ is identified with the reply $1$ and
the reply $\False$ is identified with the reply~$2$.

\PGAmr\ has a constant $u$ for each $u \in \PInstrmr$.
The operators of \PGAmr\ are the same as the operators as \PGA.
Likewise, the axioms of \PGAmr\ are the same as the axioms as \PGA.

The intuition concerning multiple-reply test instructions given above
will be made fully precise in Section~\ref{sect-thread-extr}, using an
extension of basic thread algebra introduced in Section~\ref{sect-BTA}.

\subsection{Process Construction and Interaction with Services}
\label{subsect-pc-tsi}

Recall that, in \PGAmr, it is assumed that a fixed but arbitrary set
$\BInstr$ of basic instructions has been given.
In the sequel, we will make use a version of \PGAmr\ in which the
following additional assumptions relating to $\BInstr$ are made:
\begin{iteml}
\item
a fixed but arbitrary set $\Foci$ of \emph{foci} has been given;
\item
a fixed but arbitrary set $\Meth$ of \emph{methods} has been given;
\item
a fixed but arbitrary set $\AAct$ of \emph{atomic actions} has been
given;
\item
$\BInstr$ consists of:
\begin{iteml}
\item
for each $f \in \Foci$, $m \in \Meth$,
a \emph{program-service interaction instruction} $f.m$;
\item
for each $n \in \Natpos$, for each $e_1,\ldots,e_n \in \AAct$,
a \emph{process construction instruction} $\ac(e_1,\ldots,e_n)$.
\end{iteml}
\end{iteml}

Each focus plays the role of a name of some service provided by an
execution environment that can be requested to process a command.
Each method plays the role of a command proper.
Executing a basic instruction $f.m$ is taken as making a request to the
service named $f$ to process command $m$.

On execution of a basic instruction $\ac(e_1,\ldots,e_n)$, first a
non-deterministic choice between the atomic actions $e_1,\ldots,e_n$ is
made and then the chosen atomic action is performed.
The reply $1$ is produced if $e_1$ is performed, \ldots, the reply $n$
is produced if $e_n$ is performed.
Basic instructions of this kind are material to produce all regular
processes by single-pass instruction sequences.

We will write \PGAmrpc\ for the version of \PGAmr\ in which the
above-mentioned additional assumptions are made.

The intuition concerning program-service interaction instructions given
above will be made fully precise in Section~\ref{sect-services}, using
an extension of basic thread algebra.
The intuition concerning process construction instructions given above
will be made fully precise in Section~\ref{sect-process-extr}, using the
process algebra introduced in Section~\ref{sect-ACP}.
It will not be made fully precise using an extension of basic thread
algebra because it is considered a basic property of threads that they
are deterministic behaviours.

\section{Basic Thread Algebra with Postconditional Switching}
\label{sect-BTA}

In this section, we first review \BTA\ (Basic Thread Algebra) and then
extend it with postconditional switching.
All regular processes can be produced by threads as considered in
\BTA\ extended with postconditional switching provided use is made of
basic actions of a particular kind.
Those basic actions, which are the counterparts of the process
construction instructions from \PGAmrpc, are also introduced.

\subsection{Basic Thread Algebra}
\label{subsect-BTA}

\BTA\ is concerned with the behaviours that sequential programs exhibit
on execution.
These behaviours are called threads.

In \BTA, it is assumed that a fixed but arbitrary set $\BAct$ of
\emph{basic actions}, with $\Tau \notin \BAct$, has been given.
Besides, $\Tau$ is a special basic action.
We write $\BActTau$ for $\BAct \union \set{\Tau}$.
A thread performs basic actions in a sequential fashion.
Upon each basic action performed, a reply from the execution environment
of the thread determines how it proceeds.
The possible replies are $\True$ and $\False$.
Performing $\Tau$, which is considered performing an internal action,
always leads to the reply $\True$.

Although \BTA\ is one-sorted, we make this sort explicit.
The reason for this is that we will extend \BTA\ with an additional sort
in Section~\ref{sect-services}.

\BTA\ has one sort: the sort $\Thr$ of \emph{threads}.
To build terms of sort $\Thr$, it has the following constants and
operators:
\begin{iteml}
\item
the \emph{deadlock} constant $\const{\DeadEnd}{\Thr}$;
\item
the \emph{termination} constant $\const{\Stop}{\Thr}$;
\item
for each $a \in \BActTau$, the binary \emph{postconditional composition}
operator $\funct{\pccop{a}}{\Thr \x \Thr}{\Thr}$.
\end{iteml}
We assume that there are infinitely many variables of sort $\Thr$,
including $x,y,z$.
Terms of sort $\Thr$ are built as usual.
We use infix notation for the postconditional composition operator.
We introduce \emph{basic action prefixing} as an abbreviation:
$a \bapf p$ abbreviates $\pcc{p}{a}{p}$.

The thread denoted by a closed term of the form $\pcc{p}{a}{q}$ will
first perform $a$, and then proceed as the thread denoted by $p$
if the reply from the execution environment is $\True$ and proceed as
the thread denoted by $q$ if the reply from the execution environment is
$\False$.
The threads denoted by $\DeadEnd$ and $\Stop$ will become inactive and
terminate, respectively.

\BTA\ has only one axiom.
This axiom is given in Table~\ref{axioms-BTA}.%
\begin{table}[!tb]
\caption{Axiom of \BTA}
\label{axioms-BTA}
\begin{eqntbl}
\begin{axcol}
\pcc{x}{\Tau}{y} = \pcc{x}{\Tau}{x}                    & \axiom{T1}
\end{axcol}
\end{eqntbl}
\end{table}
Using the abbreviation introduced above, axiom T1 can be written as
follows: $\pcc{x}{\Tau}{y} = \Tau \bapf x$.

Notice that each closed \BTA\ term denotes a thread that will become
inactive or terminate after it has performed finitely many actions.
Infinite threads can be described by guarded recursion.

A \emph{guarded recursive specification} over \BTA\ is a set of
recursion equations $E = \set{X = t_X \where X \in V}$, where $V$ is a
set of variables of sort $\Thr$ and each $t_X$ is a \BTA\ term of the
form $\DeadEnd$, $\Stop$ or $\pcc{t}{a}{t'}$ with $t$ and $t'$ that
contain only variables from $V$.
We write $\vars(E)$ for the set of all variables that occur in $E$.
We are only interested in models of \BTA\ in which guarded recursive
specifications have unique solutions, such as the projective limit model
of \BTA\ presented in~\cite{BB03a}.

For each guarded recursive specification $E$ and each $X \in \vars(E)$,
we introduce a constant $\rec{X}{E}$ of sort $\Thr$ standing for the
unique solution of $E$ for $X$.
The axioms for these constants are given in Table~\ref{axioms-rec}.%
\begin{table}[!t]
\caption{Axioms for guarded recursion}
\label{axioms-rec}
\begin{eqntbl}
\begin{saxcol}
\rec{X}{E} = \rec{t_X}{E} & \mif X \!=\! t_X \in E       & \axiom{RDP}
\\
E \Implies X = \rec{X}{E} & \mif X \in \vars(E)          & \axiom{RSP}
\end{saxcol}
\end{eqntbl}
\end{table}
In this table, we write $\rec{t_X}{E}$ for $t_X$ with, for all
$Y \in \vars(E)$, all occurrences of $Y$ in $t_X$ replaced by
$\rec{Y}{E}$.\linebreak[2]
$X$, $t_X$ and $E$ stand for an arbitrary variable of sort $\Thr$, an
arbitrary \BTA\ term of sort $\Thr$ and an arbitrary guarded recursive
specification over \BTA, respectively.
Side conditions are added to restrict what $X$, $t_X$ and $E$ stand for.

Closed terms that denote the same infinite thread cannot always be
proved equal by means of the axioms given in Table~\ref{axioms-rec}.
We introduce \AIP\ (Approximation Induction Principle) to remedy this.
\AIP\ is based on the view that two threads are identical if their
approximations up to any finite depth are identical.
The approximation up to depth $n$ of a thread is obtained by cutting it
off after it has performed $n$ actions.
In \AIP, the approximation up to depth $n$ is phrased in terms of the
unary \emph{projection} operator $\funct{\projop{n}}{\Thr}{\Thr}$.
\AIP\ and the axioms for the projection operators are given in
Table~\ref{axioms-AIP}.%
\begin{table}[!t]
\caption{Approximation induction principle}
\label{axioms-AIP}
\begin{eqntbl}
\begin{axcol}
\AND{n \geq 0} \proj{n}{x} = \proj{n}{y} \Implies x = y & \axiom{AIP} \\
\proj{0}{x} = \DeadEnd                                  & \axiom{P0} \\
\proj{n+1}{\Stop} = \Stop                               & \axiom{P1} \\
\proj{n+1}{\DeadEnd} = \DeadEnd                         & \axiom{P2} \\
\proj{n+1}{\pcc{x}{a}{y}} =
                      \pcc{\proj{n}{x}}{a}{\proj{n}{y}} & \axiom{P3}
\end{axcol}
\end{eqntbl}
\end{table}
In this table, $a$ stands for an arbitrary bascic action from
$\BActTau$.

\subsection{Postconditional Switching}
\label{subsect-pcs}

We introduce \BTApcs, an extension of \BTA\ with postconditional
switching.
Postconditional switching is like postconditional composition, but
covers the case where the execution environment produces reply
values from the set $\Natpos$ instead of the set $\set{\True,\False}$.
Postconditional switching was first introduced in~\cite{BM08f}.

In \BTApcs, like in \BTA, it is assumed that a fixed but arbitrary set
$\BAct$ of basic actions, with $\Tau \notin \BAct$, has been given.
\BTApcs\ has the constants and operators of \BTA\ and in addition:
\begin{iteml}
\item
for each $a \in \BActTau$ and $k \in \Natpos$,
the $k$-ary \emph{postconditional switch} operator
$\funct{\pcsop{k}{a}}
  {\underbrace{\Thr \x \cdots \x \Thr}_{k \;\mathrm{times}}}
  {\Thr}$.
\end{iteml}

The thread denoted by a closed terms of the form
$\pcs{k}{a}{p_1,\ldots,p_k}$ will first perform $a$, and then proceed as
the thread denoted by $p_1$ if the processing of $a$ leads to the
reply~$1$, \ldots, proceed as the thread denoted by $p_k$ if the
processing of $a$ leads to the reply $k$.

For each $a \in \BActTau$, the operator $\pccop{a}$ is considered
essentially the same as the operator $\pcsop{2}{a}$.
For that reason, the reply $\True$ is identified with the reply $1$ and the
reply $\False$ is identified with the reply $2$.

Without additional assumptions about the set $\BAct$ of basic actions,
axioms S1 and T2 from Table~\ref{axioms-pcs} are the only axioms for
postconditional switching.
Axiom S1 expresses that the operators $\pccop{a}$ and $\pcsop{2}{a}$
are essentially the same.
Like axiom T1, axiom T2 reflects that performing $\Tau$ always leads to
the reply $1$.

Guarded recursion can be added to \BTApcs\ as it is added to \BTA\ in
Section~\ref{subsect-BTA}.

\subsection{Process Construction and Interaction with Services}
\label{subsect-pc-acts}

Recall that, in \BTApcs, it is assumed that a fixed but arbitrary set
$\BAct$ of basic actions has been given.
Like in the case of \PGAmr, we will make use in the sequel of a version
of \BTApcs\ in which the following additional assumptions relating to
$\BAct$ are made:
\begin{iteml}
\item
a fixed but arbitrary set $\Foci$ of \emph{foci} has been given;
\item
a fixed but arbitrary set $\Meth$ of \emph{methods} has been given;
\item
a fixed but arbitrary set $\AAct$ of \emph{atomic actions} has been
given;
\item
$\BAct$ consists of:
\begin{iteml}
\item
for each $f \in \Foci$ and $m \in \Meth$,
a \emph{thread-service interaction action} $f.m$;
\item
for each $n \in \Natpos$, for each $e_1,\ldots,e_n \in \AAct$,
a \emph{process construction action} $\ac(e_1,\ldots,e_n)$.
\end{iteml}
\end{iteml}

Like in the case of \PGAmr, performing a basic instruction $f.m$ is
taken as making a request to the service named $f$ to process command
$m$.

Like in the case of \PGAmr, on performing a basic action
$\ac(e_1,\ldots,e_n)$, first a non-deterministic choice between the
atomic actions $e_1,\ldots,e_n$ is made and then the chosen atomic
action is performed.
The reply $1$ is produced if $e_1$ is performed, \ldots, the reply $n$
is produced if $e_n$ is performed.

In Table~\ref{axioms-pcs}, axioms are given for the postconditional
switching operators which cover the case where the above-mentioned
additional assumptions about $\BAct$ are made.%
\begin{table}[!t]
\caption{Axioms for postconditional switching}
\label{axioms-pcs}
\begin{eqntbl}
\begin{saxcol}
\pcc{x}{a}{y} = \pcs{2}{a}{x,y}                        & & \axiom{S1} \\
\pcs{k}{\ac(e_1,\ldots,e_n)}{x_1,\ldots,x_k} =
\pcs{n}{\ac(e_1,\ldots,e_n)}{x_1,\ldots,x_n}
                                            & \mif n < k & \axiom{S2} \\
\pcs{k}{\ac(e_1,\ldots,e_n)}{x_1,\ldots,x_k} =
\pcs{n}{\ac(e_1,\ldots,e_n)}{x_1,\ldots,x_k,
\underbrace{\DeadEnd,\ldots,\DeadEnd}_{n-k \,\mathrm{times}}}
                                            & \mif n > k & \axiom{S3} \\
\pcs{k}{\Tau}{x_1,\ldots,x_k} =
\smash{\pcs{k}{\Tau}{\overbrace{x_1,\ldots,x_1}^{k \,\mathrm{times}}}}
                                                       & & \axiom{T2}
\end{saxcol}
\end{eqntbl}
\end{table}
In this table, $a$ stands for an arbitrary basic action from $\BActTau$
and $e_1,\ldots,e_n$ stand for arbitrary atomic actions from $\AAct$.

Axioms S2 and S3 stipulate that a thread denoted by a term of the
form $\pcs{k}{\ac(e_1,\ldots,e_n)}{p_1,\ldots,p_k}$ behaves as if it
concerns a $n$-ary postconditional switch if $n \neq k$.
The $n$-ary postconditional switch in question is obtained by removing
$p_{n+1},\ldots,p_k$ if $n < k$, and is obtained by adding $\DeadEnd$
sufficiently many times if $n > k$.

We will write \BTApcspc\ for the version of \BTApcs\ in which the
above-men\-tioned additional assumptions are made.

\section{Thread Extraction}
\label{sect-thread-extr}

In this short section, we use \BTApcs\ with guarded recursion to make
mathematically precise what threads are produced by the single-pass
instruction sequences denoted by closed \PGAmr\ terms.

The \emph{thread extraction} operation $\textr{\ph}$ determines, for
each closed \PGAmr\ term $P$, a closed term of \BTApcs\ with guarded
recursion that denotes the thread produced by the single-pass
instruction sequence denoted by $P$.
The thread extraction operation is defined by the equations given in
Table~\ref{axioms-thread-extr} (for $a \in \BInstr$, $n \in \Natpos$,
$l \in \Nat$, and $u \in \PInstrmr$)%
\begin{table}[!t]
\caption{Defining equations for thread extraction operation}
\label{axioms-thread-extr}
\begin{eqntbl}
\begin{eqncol}
\textr{a} = a \bapf \DeadEnd \\
\textr{a \conc X} = a \bapf \textr{X} \\
\textr{\ptst{a}} = a \bapf \DeadEnd \\
\textr{\ptst{a} \conc X} =
\pcc{\textr{X}}{a}{\textr{\fjmp{2} \conc X}} \\
\textr{\ntst{a}} = a \bapf \DeadEnd \\
\textr{\ntst{a} \conc X} =
\pcc{\textr{\fjmp{2} \conc X}}{a}{\textr{X}} \\
\textr{\pmrtst{n}{a}} = a \bapf \DeadEnd \\
\textr{\pmrtst{n}{a} \conc X} =
\pcs{n}{a}{\textr{\fjmp{1} \conc X},\ldots,\textr{\fjmp{n} \conc X}} \\
\textr{\nmrtst{n}{a}} = a \bapf \DeadEnd \\
\textr{\nmrtst{n}{a} \conc X} =
\pcs{n}{a}{\textr{\fjmp{n} \conc X},\ldots,\textr{\fjmp{1} \conc X}}
\end{eqncol}
\qquad
\begin{eqncol}
\textr{\fjmp{l}} = \DeadEnd \\
\textr{\fjmp{0} \conc X} = \DeadEnd \\
\textr{\fjmp{1} \conc X} = \textr{X} \\
\textr{\fjmp{l+2} \conc u} = \DeadEnd \\
\textr{\fjmp{l+2} \conc u \conc X} = \textr{\fjmp{l+1} \conc X} \\
\\
\textr{\halt} = \Stop \\
\textr{\halt \conc X} = \Stop
\end{eqncol}
\end{eqntbl}
\end{table}
and the rule that $\textr{\fjmp{l} \conc X} = \DeadEnd$ if $\fjmp{l}$ is
the beginning of an infinite jump chain.
This rule is formalized in e.g.~\cite{BM07g}.

The equations in Table~\ref{axioms-thread-extr} relating to the
primitive instructions of \PGA\ are the equations that have been used to
define the thread extraction operation for \PGA\ in most earlier work on
\PGA\ (see e.g.~\cite{BM07g,PZ06a}).
The additional equations relating to multiple-reply test instructions
are obvious generalizations of the equations relating to the test
instructions of \PGA.

Let $P$ be a closed \PGAmr\ term.
Then we say that $\textr{P}$ is the \emph{thread produced by} $P$.

\section{Process Algebra}
\label{sect-ACP}

In this section, we review \ACPt\ (Algebra of Communicating Processes
with abstraction).
This is the process algebra that will be used in
Section~\ref{sect-process-extr} to make precise what processes are
produced by the single-pass instruction sequences denoted by closed
\PGAmrpc\ terms.

In \ACPt, it is assumed that a fixed but arbitrary set $\Act$ of
\emph{atomic actions}, with $\tau,\dead \notin \Act$, and a fixed but
arbitrary commutative and associative function
$\funct{\commm}{\Act \union \set{\tau} \x \Act \union \set{\tau}}
               {\Act \union \set{\dead}}$,
with $\tau \commm e = \dead$ for all $e \in \Act \union \set{\tau}$,
have been given.
The function $\commm$ is regarded to give the result of synchronously
performing any two atomic actions for which this is possible, and to
give $\dead$ otherwise.
In \ACPt, $\tau$ is a special atomic action, called the silent step.
The act of performing the silent step is considered unobservable.
Because it would otherwise be observable, the silent step is considered
an atomic action that cannot be performed synchronously with other
atomic actions.
We write $\Actt$ for $\Act \union \set{\tau}$.

\ACPt\ has the following constants and operators:
\begin{itemize}
\item
for each $e \in \Act$, the \emph{atomic action} constant $e$\,;
\item
the \emph{silent step} constant $\tau$\,;
\item
the \emph{deadlock} constant $\dead$\,;
\item
the binary \emph{alternative composition} operator $\altc$\,;
\item
the binary \emph{sequential composition} operator $\seqc$\,;
\item
the binary \emph{parallel composition} operator $\parc$\,;
\item
the binary \emph{left merge} operator $\leftm$\,;
\item
the binary \emph{communication merge} operator $\commm$\,;
\item
for each $H \subseteq \Act$, the unary \emph{encapsulation} operator
$\encap{H}$\,;
\item
for each $I \subseteq \Act$, the unary \emph{abstraction} operator
$\abstr{I}$\,.
\end{itemize}
We assume that there are infinitely many variables.
Terms are built as usual.
We use infix notation for the binary operators.

Let $p$ and $q$ be closed \ACPt\ terms, $e \in \Act$, and
$H,I \subseteq \Act$.
Intuitively, the constants and operators to build \ACPt\ terms can be
explained as follows:
\begin{itemize}
\item
$e$ first performs atomic action $e$ and next terminates successfully;
\item
$\tau$ performs an unobservable atomic action and next terminates
successfully;
\item
$\dead$ can neither perform an atomic action nor terminate successfully;
\item
$p \altc q$ behaves either as $p$ or as $q$, but not both;
\item
$p \seqc q$ first behaves as $p$ and on successful termination of $p$
it next behaves as~$q$;
\item
$p \parc q$ behaves as the process that proceeds with $p$ and $q$ in
parallel;
\item
$p \leftm q$ behaves the same as $p \parc q$, except that it starts
with performing an atomic action of $p$;
\item
$p \commm q$ behaves the same as $p \parc q$, except that it starts with
performing an\linebreak[2] atomic action of $p$ and an atomic action of
$q$ synchronously;
\item
$\encap{H}(p)$ behaves the same as $p$, except that atomic actions from
$H$ are blocked;
\item
$\abstr{I}(p)$ behaves the same as $p$, except that atomic actions from
$I$ are turned into unobservable atomic actions.
\end{itemize}
The operators $\leftm$ and $\commm$ are of an auxiliary nature.
They are needed to axiomatize \ACPt.
The axioms of \ACPt\ are given in e.g.~\cite{Fok00}.

We write $\vAltc{i \in S} p_i$, where $S = \set{i_1,\ldots,i_n}$ and
$p_{i_1},\ldots,p_{i_n}$ are \ACPt\ terms,
for $p_{i_1} \altc \ldots \altc p_{i_n}$.
The convention is that $\vAltc{i \in S} p_i$ stands for $\dead$ if
$S = \emptyset$.

A \emph{recursive specification} over \ACPt\ is a set of recursion
equations $E = \set{X = t_X \where X \in V}$, where $V$ is a set of
variables and each $t_X$ is an \ACPt\ term containing only variables
from $V$.
Let $t$ be an \ACPt\ term without occurrences of abstraction operators
containing a variable $X$.
Then an occurrence of $X$ in $t$ is \emph{guarded} if $t$ has a subterm
of the form $e \seqc t'$ where $e \in \Act$ and $t'$ is a term
containing this occurrence of $X$.
Let $E$ be a recursive specification over \ACPt.
Then $E$ is a \emph{guarded recursive specification} if, in each
equation $X = t_X \in E$:
(i)~abstraction operators do not occur in $t_X$ and
(ii)~all occurrences of variables in $t_X$ are guarded or $t_X$ can be
rewritten to such a term using the axioms of \ACPt\ in either direction
and/or the equations in $E$ except the equation $X = t_X$ from left to
right.
We only consider models of \ACPt\ in which guarded recursive
specifications have unique solutions, such as the models of \ACPt\
presented in~\cite{BW90}.

For each guarded recursive specification $E$ and each variable $X$ that
occurs in $E$, we introduce a constant $\rec{X}{E}$ standing for the
unique solution of $E$ for $X$.
The axioms for these constants are given in~\cite{Fok00}.

\section{Process Extraction}
\label{sect-process-extr}

In this section, we use \ACPt\ with guarded recursion to make
mathematically precise what processes are produced by the single-pass
instruction sequences denoted by closed \PGAmrpc\ terms.

For that purpose, $\Act$ and $\commm$ are taken such that:
\begin{ldispl}
\begin{aeqns}
\multicolumn{3}{@{}l@{}}{\AAct \subseteq \Act\;,}
\\
\Act \diff \AAct  & = &
\set{\snd_f(d) \where f \in \Foci, d \in \Meth \union \Nat} \union
\set{\rcv_f(d) \where f \in \Foci, d \in \Meth \union \Nat}
\\ & {} \union {} &
\set{\snd_\serv(r) \where r \in \Nat} \union
\set{\rcv_\serv(m) \where m \in \Meth} \union
\set{\stp,\ol{\stp},\stp^*,\iact}
\end{aeqns}
\end{ldispl}%
and for all $e,e' \in \Act$, $f \in \Foci$, $d \in \Meth \union \Nat$,
$m \in \Meth$, and $r \in \Nat$:
\begin{ldispl}
\begin{aeqns}
\snd_f(d) \commm \rcv_f(d) = \iact \;,
\\
\snd_f(d) \commm e = \dead & & \mif e \neq \rcv_f(d)\;,
\\
e \commm \rcv_f(d) = \dead & & \mif e \neq \snd_f(d)\;,
\eqnsep
\snd_\serv(r) \commm e = \dead\;,
\\
e \commm \rcv_\serv(m) = \dead\;,
\end{aeqns}
\qquad\;
\begin{aeqns}
\stp \commm \ol{\stp} = \stp^*\;,
\\
\stp \commm e = \dead      & & \mif e \neq \ol{\stp}\;,
\\
e \commm \ol{\stp} = \dead & & \mif e \neq \stp\;,
\eqnsep
\iact \commm e = \dead\;,
\\
e' \commm e = \dead & & \mif e' \in \AAct\;.
\end{aeqns}
\end{ldispl}

The \emph{process extraction} operation $\pextr{\ph}$ determines, for
each closed \BTApcspc\ term $p$, a closed term of \ACPt\ with guarded
recursion that denotes the process produced by the thread denoted by
$p$.
The process extraction operation $\pextr{\ph}$ is defined by
$\pextr{p} = \abstr{\set{\stp}}(\cpextr{p})$, where $\cpextr{\ph}$ is
defined by the equations given in Table~\ref{eqns-process-extr}
(for $f \in \Foci$, $m \in \Meth$, and $e_1,\ldots,e_n \in \AAct$).%
\begin{table}[!t]
\caption{Defining equations for process extraction operation}
\label{eqns-process-extr}
\begin{eqntbl}
\begin{seqncol}
\cpextr{X} = X
\\
\cpextr{\Stop} = \stp
\\
\cpextr{\DeadEnd} = \iact \seqc \dead
\\
\cpextr{\pcc{t_1}{\Tau}{t_2}} = \iact \seqc \iact \seqc \cpextr{t_1}
\\
\cpextr{\pcc{t_1}{f.m}{t_2}} =
\snd_f(m) \seqc
(\rcv_f(1) \seqc \cpextr{t_1} \altc \rcv_f(2) \seqc \cpextr{t_2})
\\
\cpextr{\pcc{t_1}{\ac(e_1,\ldots,e_n)}{t_2}} =
e_1 \seqc \cpextr{t_1} \altc
e_2 \seqc \cpextr{t_2} \altc \ldots \altc e_n \seqc \cpextr{t_2}
\\
\cpextr{\pcs{k}{\Tau}{t_1,\ldots,t_k}} =
\iact \seqc \iact \seqc \cpextr{t_1}
\\
\cpextr{\pcs{k}{f.m}{t_1,\ldots,t_k}} =
\snd_f(m) \seqc
(\rcv_f(1) \seqc \cpextr{t_1} \altc \ldots \altc
 \rcv_f(k) \seqc \cpextr{t_k})
\\
\cpextr{\pcs{k}{\ac(e_1,\ldots,e_n)}{t_1,\ldots,t_k}} =
e_1 \seqc \cpextr{t_1} \altc \ldots \altc e_n \seqc \cpextr{t_n}
 & \mif n \leq k
\\
\cpextr{\pcs{k}{\ac(e_1,\ldots,e_n)}{t_1,\ldots,t_k}} =
{} \\ \qquad
e_1 \seqc \cpextr{t_1} \altc \ldots \altc e_k \seqc \cpextr{t_k} \altc
e_{k+1} \seqc \iact \seqc \dead \altc \ldots \altc
e_n \seqc \iact \seqc \dead
 & \mif n > k
\\
\cpextr{\rec{X}{E}} =
\rec{X}{\set{X' = \cpextr{t_{X'}} \where X' = t_{X'} \,\in\, E}}
\end{seqncol}
\end{eqntbl}
\end{table}

Two atomic actions are involved in performing a basic action of the form
$f.m$: one for sending a request to process command $m$ to the service
named $f$ and another for receiving a reply from that service upon
completion of the processing.
Performing a basic action of the form $\ac(e_1,\ldots,e_n)$ always gives
rise to a non-deterministic choice between $n$ alternatives, where $e_i$
is the first atomic action of the $i$-th alternatives.

For each closed \BTApcspc\ term $p$, $\cpextr{p}$ denotes a process that
will perform a special termination action just before successful
termination.
Abstracting from this termination action yields the process denoted by
$\pextr{p}$.
In Section~\ref{sect-services}, \BTApcspc\ is extended with use
operators, which are concerned with threads making use of services.
The process extraction operation $\pextr{\ph}$ for \BTApcspc\ is defined
here in terms of $\cpextr{\ph}$ to allow for the process extraction
operation for the extension of \BTApcspc\ with use operators to be
defined easily.

Some actions introduced above are not used in the definition of the
process extraction operation for \BTApcspc.
Those actions are used in the definition of the process extraction
operation for the extension of \BTApcspc\ with use operators.

Let $p$ be a closed \BTApcspc\ term and $P$ be a closed \PGAmrpc\ term.
Then we say that $\pextr{p}$ is the \emph{process produced by} $p$ and
$\pextr{\textr{P}}$ is the \emph{process produced by} $P$.

\sloppy
The process extraction operation preserves the axioms of \BTApcspc\ with
guarded recursion.
Roughly speaking, this means that the translations of these axioms are
derivable from the axioms of \ACPt\ with guarded recursion.
Before we make this fully precise, we have a closer look at the axioms
of \BTApcspc\ with guarded recursion.

A proper axiom is an equation or a conditional equation.
In Tables~\ref{axioms-rec} and~\ref{axioms-pcs}, we do not find proper
axioms.
Instead of proper axioms, we find axiom schemas without side conditions
and axiom schemas with syntactic side conditions.
The axioms of \BTApcspc\ with guarded recursion are obtained by
replacing each axiom schema by all its instances.

We define a function $\transl{\ph}$ from the set of all equations and
conditional equations of \BTApcspc\ with guarded recursion to the set of
all equations of \ACPt\ with guarded recursion as follows:
\begin{ldispl}
\transl{t_1 = t_2} \;\;=\;\; \pextr{t_1} = \pextr{t_2}\;,
\\
\transl{E \Implies t_1 = t_2} \;\;=\;\;
\set{\pextr{t'_1} = \pextr{t'_2} \where t'_1 = t'_2 \,\in\, E} \Implies
\pextr{t_1} = \pextr{t_2}\;.
\end{ldispl}
\begin{proposition}
\label{prop-preservation-axioms}
Let $\phi$ be an axiom of \BTApcspc\ with guarded recursion.
Then $\transl{\phi}$ is derivable from the axioms of \ACPt\ with guarded
recursion.
\end{proposition}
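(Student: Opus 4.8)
The plan is to verify, axiom schema by axiom schema, that each instance $\phi$ of an axiom of \BTApcspc\ with guarded recursion is mapped by $\transl{\ph}$ to an equation derivable in \ACPt\ with guarded recursion. Since $\transl{\ph}$ is defined through $\pextr{\ph} = \abstr{\set{\stp}}(\cpextr{\ph})$, the basic move in every case is: push the process extraction operation inside using the defining equations of $\cpextr{\ph}$ in Table~\ref{eqns-process-extr}, then discharge the resulting identity with the \ACPt\ axioms (associativity/commutativity of $\altc$, the $\altc$-idempotence law $x \altc x = x$, the laws governing $\seqc$ over $\altc$, the $\tau$-laws for $\abstr{\set{\stp}}$, and RDP/RSP for the recursion constants). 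I would first set up the reduction to $\cpextr{\ph}$: note that $\abstr{\set{\stp}}$ is a homomorphism for $\altc$ and $\seqc$ and commutes with the $\iact$-prefixes and with recursion constants whose bodies contain no $\stp$, so it suffices to prove the translated equations with $\cpextr{\ph}$ in place of $\pextr{\ph}$, except in the one place ($\cpextr{\Stop} = \stp$) where the abstraction actually acts — and there the two sides agree anyway.

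Next I would go through the axioms in turn. For T1, i.e. $\pcc{x}{\Tau}{y} = \pcc{x}{\Tau}{x}$: both sides extract, via the $\Tau$-clause, to $\iact \seqc \iact \seqc \cpextr{x}$, so the translated equation is literally $X = X$. For S1, $\pcc{x}{a}{y} = \pcs{2}{a}{x,y}$: one checks the three cases $a = \Tau$, $a = f.m$, $a = \ac(e_1,\dots,e_n)$, and in each case the $\pccop{a}$-clause and the corresponding $\pcsop{2}{a}$-clause of Table~\ref{eqns-process-extr} produce syntactically identical \ACPt\ terms (for the $\ac$-case one uses that the $n > 2$ and $n \leq 2$ subcases of the $\pcsop{2}{a}$-clause match the single $\pccop{a}$-clause). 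For T2, $\pcs{k}{\Tau}{x_1,\dots,x_k} = \pcs{k}{\Tau}{x_1,\dots,x_1}$: both sides extract to $\iact \seqc \iact \seqc \cpextr{x_1}$, again giving $X = X$. For S2 and S3 one compares the $n < k$ and $n > k$ clauses of the $\pcsop{k}{\ac(\cdots)}$ extraction: in the $n<k$ case (S2) both sides give $e_1\seqc\cpextr{t_1}\altc\cdots\altc e_n\seqc\cpextr{t_n}$ directly; in the $n>k$ case (S3) the left side gives $\cdots\altc e_k\seqc\cpextr{t_k}\altc e_{k+1}\seqc\iact\seqc\dead\altc\cdots$ and the right side, being a $\pcsop{n}{\ac}$ with $n$ arguments where the last $n-k$ are $\DeadEnd$, extracts $e_i\seqc\cpextr{\DeadEnd} = e_i\seqc\iact\seqc\dead$ for those positions — so the two coincide. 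For the projection axioms P0--P3 one uses that $\transl{\ph}$ must also be checked on them if projections are counted among the operators; if so, P3 is handled just like T1/S1 by pushing $\cpextr{\ph}$ through, noting that the projection operator on the thread side is mirrored by a projection on the process side (or, more simply, by observing that these are part of a conservative extension and $\transl{\ph}$ acts trivially). Finally, for RDP and RSP: RDP, $\rec{X}{E} = \rec{t_X}{E}$, translates using the last clause of Table~\ref{eqns-process-extr} to exactly the instance of RDP for the transformed specification $\set{X' = \cpextr{t_{X'}} \where X' = t_{X'} \in E}$, which is guarded because each $t_{X'}$ has the form $\DeadEnd$, $\Stop$ or $\pcc{\cdot}{a}{\cdot}$ and the corresponding $\cpextr{\ph}$ begins with an atomic-action prefix ($\iact$, $\stp$, $\snd_f(m)$, or some $e_i$); RSP translates similarly to the RSP instance for that transformed specification.

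The main obstacle I expect is the guardedness bookkeeping for RDP/RSP, together with the need for $\cpextr{\ph}$ to respect the guarded-recursion machinery: one must check that applying $\cpextr{\ph}$ to a guarded recursive specification over \BTApcspc\ yields a genuinely guarded recursive specification over \ACPt, so that the constant $\rec{X}{\set{X' = \cpextr{t_{X'}}\where\cdots}}$ in the last clause of Table~\ref{eqns-process-extr} is actually well-defined and RDP/RSP apply to it. This hinges on the observation that every admissible right-hand side $t_X$ ($\DeadEnd$, $\Stop$, or $\pcc{\cdot}{a}{\cdot}$, and likewise the $\pcsop{k}{a}$ forms) is sent by $\cpextr{\ph}$ to a term each of whose top-level $\altc$-summands has the shape $e \seqc (\cdots)$ with $e \in \Act$, so every occurrence of a recursion variable is guarded in the \ACPt\ sense. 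Once that is in place, the remaining work for every axiom is the routine case analysis sketched above, and the proof is essentially a table-against-table comparison.
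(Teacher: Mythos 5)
Your overall strategy---unfold $\cpextr{\ph}$ on both sides of each axiom instance and check the resulting \ACPt\ terms, with the guardedness of the transformed recursive specification as the only point needing real care---is exactly what the paper has in mind; the paper's own proof consists of the single sentence ``The proof is trivial,'' so your worked-out case analysis is strictly more informative, and your treatment of T1, T2, S2, S3, RDP and RSP (including the observation that every admissible right-hand side is sent to a term whose summands are action-guarded, and the hedge about P0--P3 not being translatable for lack of a $\cpextr{\ph}$ clause for projection) is correct.

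One sub-case you assert does not actually go through as stated: for S1 instantiated with $a = \ac(e_1,\ldots,e_n)$ and $n > 2$, the two extraction clauses are \emph{not} syntactically identical. The clause for $\pcc{t_1}{\ac(e_1,\ldots,e_n)}{t_2}$ yields $e_1 \seqc \cpextr{t_1} \altc e_2 \seqc \cpextr{t_2} \altc \ldots \altc e_n \seqc \cpextr{t_2}$, whereas the clause for $\pcs{2}{\ac(e_1,\ldots,e_n)}{t_1,t_2}$ (the $n>k$ case) yields $e_1 \seqc \cpextr{t_1} \altc e_2 \seqc \cpextr{t_2} \altc e_3 \seqc \iact \seqc \dead \altc \ldots \altc e_n \seqc \iact \seqc \dead$; these coincide only when $\cpextr{t_2}$ happens to equal $\iact \seqc \dead$, e.g.\ they differ already for $t_2 = \Stop$. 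So the translated instance of S1 is not derivable there, and in fact the defining equations of $\cpextr{\ph}$ are in tension with S1 combined with S3 on such terms. This is a defect of the paper's definitions rather than something your argument could repair, but since you explicitly claim syntactic identity in the $\ac$-case, you should either restrict that claim to $n \leq 2$ and flag the $n > 2$ instances as problematic, or note that the proposition can only hold if the $\pccop{\ac(\ldots)}$ extraction clause is read as sending the replies $3,\ldots,n$ to deadlock rather than to $t_2$.
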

\begin{proof}
The proof is trivial.
\qed
\end{proof}
Proposition~\ref{prop-preservation-axioms} would go through if no
abstraction of the above-mentioned special termination action was made.
However, the expressiveness results for \PGAmrpc\ relating to processes
that are presented in Section~\ref{sect-expressiveness} would not go
through.
Notice further that \ACPt\ without the silent step constant and the
abstraction operator, better known as \ACP, would suffice if no
abstraction of the special termination action was made.

\section{Expressiveness of \PGAmrpc}
\label{sect-expressiveness}

In this section, we show that all regular processes can be produced by
the single-pass instruction sequences considered in program algebra with
multiple-reply test instructions.

We begin by making precise what it means that a thread can only be in a
finite number of states.
We assume that a fixed but arbitrary model $\fM$ of \BTApcs\ extended
with guarded recursion has been given, we use the term thread only for
the elements from the domain of $\fM$, and we denote the interpretations
of constants and operators in $\fM$ by the constants and operators
themselves.

Let $p$ be a thread.
Then the set of \emph{states} or \emph{residual threads} of $p$,
written $\Res(p)$, is inductively defined as follows:
\begin{itemize}
\item
$p \in \Res(p)$;
\item
if $\pcc{q}{a}{r} \in \Res(p)$, then $p,q \in \Res(p)$;
\item
if $\pcs{k}{a}{p_1,\ldots,p_k} \in \Res(p)$, then
$p_1,\ldots,p_k \in \Res(p)$.
\end{itemize}

Let $p$ be a thread and let $\BAct' \subseteq \BActTau$.
Then $p$ is \emph{regular over} $\BAct'$ if the following conditions are
satisfied:
\begin{iteml}
\item
$\Res(p)$ is finite;
\item
for all $q,r \in \Res(p)$ and $a \in \BActTau$,
$\pcc{q}{a}{r} \in \Res(p)$ implies $a \in \BAct'$;
\item
for all $p_1,\ldots,p_k \in \Res(p)$ and $a \in \BActTau$,
$\pcs{k}{a}{p_1,\ldots,p_k} \in \Res(p)$ implies $a \in \BAct'$.
\end{iteml}
We say that $p$ is \emph{regular} if $p$ is regular over $\BActTau$.
\pagebreak[2]

We will make use of the fact that being a regular thread coincides with
being the solution of a finite guarded recursive specification in which
the right-hand sides of the recursion equations are of a restricted
form.

A \emph{linear recursive specification} over \BTApcs\ is a guarded
recursive specification $E = \set{X = t_X \where X \in V}$ over
\BTApcs, where each $t_X$ is a term of the form $\DeadEnd$, $\Stop$,
$\pcc{Y}{a}{Z}$ with $Y,Z \in V$ or $\pcs{k}{a}{X_1,\ldots,X_k}$ with
$X_1,\ldots,X_k \in V$.
\begin{proposition}
\label{prop-lin-rec-thread}
Let $p$ be a thread and let $\BAct' \subseteq \BActTau$.
Then $p$ is regular over $\BAct'$ iff there exists a finite linear
recursive specification $E$ over \BTApcs\ in which only basic actions
from $\BAct'$ occur such that $p$ is the solution of $E$ for some
$X \in \vars(E)$.
\end{proposition}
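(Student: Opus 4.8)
The plan is to prove the two directions separately, both by fairly direct constructions.

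For the ``if'' direction, suppose $p$ is the solution of a finite linear recursive specification $E$ over $\BTApcs$ in which only basic actions from $\BAct'$ occur, and $p$ is the solution for some $X_0 \in \vars(E)$. First I would show, using RSP (and working in the model $\fM$), that the solution of $E$ for each $X \in \vars(E)$ lies in $\Res(p)$; this follows by noting that the residual threads of the solution of a linear specification are again among the solutions of the equations of that specification, using the inductive clauses in the definition of $\Res$ together with the fact that each right-hand side $t_X$ is $\DeadEnd$, $\Stop$, $\pcc{Y}{a}{Z}$, or $\pcs{k}{a}{X_1,\ldots,X_k}$. Hence $\Res(p)$ is contained in a finite set, so it is finite; and since every postconditional composition or switch appearing in any $\Res(p)$ element comes from one of the right-hand sides $t_X$, its basic action is in $\BAct' $. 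This gives regularity over $\BAct'$.

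For the ``only if'' direction, suppose $p$ is regular over $\BAct'$, so $\Res(p) = \set{p_1,\ldots,p_N}$ is finite with $p_1 = p$, and every postconditional composition/switch among residuals uses actions in $\BAct'$. I would introduce a fresh variable $X_i$ for each $p_i$ and build $E$ by casing on the shape of $p_i$: if $p_i = \Stop$ put $X_i = \Stop$; if $p_i = \DeadEnd$ put $X_i = \DeadEnd$; if $p_i = \pcc{q}{a}{r}$ then $q,r \in \Res(p)$, say $q = p_j$, $r = p_k$, and put $X_i = \pcc{X_j}{a}{X_k}$; and similarly $X_i = \pcs{k}{a}{X_{j_1},\ldots,X_{j_k}}$ when $p_i$ is a postconditional switch. (One should observe these cases are exhaustive: every thread in a model of $\BTApcs$ with the relevant sort is $\DeadEnd$, $\Stop$, a postconditional composition, or a postconditional switch — this is part of what being a model of $\BTApcs$ with guarded recursion amounts to. Strictly, one would want $p_i$ to have a \emph{unique} such representation for the assignment $X_i \mapsto t_{X_i}$ to be well-defined; if not, any consistent choice works.) The resulting $E$ is finite, linear, guarded, and uses only basic actions from $\BAct'$ since those came from postconditional operators appearing among residuals of $p$. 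By construction $(p_1,\ldots,p_N)$ satisfies $E$, so by RSP (uniqueness of solutions in $\fM$) $p = p_1 = \rec{X_1}{E}$ is the solution of $E$ for $X_1$.

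The main obstacle, and the only place requiring care, is the bookkeeping in the ``only if'' direction: making the case split genuinely exhaustive and the definition of $E$ well-posed. One must appeal to the structure of the models under consideration (guarded recursive specifications have unique solutions, and every thread has one of the four syntactic forms up to the $\BTApcs$ axioms) to know the construction of $t_{X_i}$ covers all $p_i$, and one must be slightly careful with axioms T1, T2, S1, S2, S3, which identify some of these forms — it suffices to pick, for each $p_i$, one representative shape and use it consistently, after which RSP closes the argument. The ``if'' direction is routine once the residual-closure observation is stated.
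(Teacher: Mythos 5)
Your proposal is correct and is in substance the very argument the paper imports by reference: the paper offers no proof of its own beyond observing that the proof of Theorem~1 of \cite{PZ06a} (the standard correspondence between the finite set of residual threads and the variables of a finite linear recursive specification, with RSP supplying uniqueness) applies in any model and adapts trivially to postconditional switching, and that construction is exactly what you carry out, including the right caveats about exhaustiveness of the case split and the identifications forced by T1, T2 and S1--S3. One small slip worth fixing: in the ``if'' direction you first assert the containment the wrong way round (what is needed is that $\Res(p)$ is contained in the finite set of solutions of $E$, not that every solution of $E$ lies in $\Res(p)$, which may even fail for variables unreachable from $X_0$), but the justification and the conclusion you then draw are for the correct containment, so the argument itself is unharmed.
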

\begin{proof}
This proposition generalizes Theorem~1 from~\cite{PZ06a} from \BTA\ to
\BTApcs\ and from the projective limit model of \BTA\ to an arbitrary
model of \BTApcs.
However, the proof of that theorem is applicable to any model of \BTA\
and the adaptations needed to take postconditional switching
operators and their interpretations into account are trivial.
\qed
\end{proof}

All regular threads over $\BAct$ can be produced by the single-pass
instruction sequences considered in program algebra with multiple-reply
test instructions.
\begin{proposition}
\label{prop-expressiveness}
For each thread $p$ that is regular over $\BAct$, there exists a closed
\PGAmr\ term $P$ such that $p$ is the thread denoted by $\textr{P}$.
\end{proposition}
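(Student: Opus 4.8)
By Proposition~\ref{prop-lin-rec-thread}, it suffices to treat the case where $p$ is the solution for some variable of a finite linear recursive specification $E$ over \BTApcs\ in which only basic actions from $\BAct$ occur. So fix such an $E = \set{X_i = t_{X_i} \where i \in \set{1,\ldots,N}}$ with $p$ the solution for, say, $X_1$. The idea is to build a closed \PGAmr\ term $P$ by translating each recursion equation into a contiguous block of primitive instructions, concatenating the blocks in the order $X_1,X_2,\ldots,X_N$, and arranging the jumps so that entering the $i$-th block and extracting the thread reproduces $t_{X_i}$ with each $X_j$ mapped to the entry point of the $j$-th block.

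First I would set up the block layout. For $i = 1,\ldots,N$ let $B_i$ be the block of instructions for $X_i$ and let $\mathit{start}(i)$ denote the position at which $B_i$ begins in the concatenation $B_1 \conc \cdots \conc B_N$. The translation of the right-hand sides is as follows: if $t_{X_i} = \Stop$, take $B_i = \halt$; if $t_{X_i} = \DeadEnd$, take $B_i = \fjmp{0}$; if $t_{X_i} = \pcc{X_j}{a}{X_l}$, take $B_i$ to be $\ptst{a} \conc \fjmp{d_j} \conc \fjmp{d_l}$, where $d_j$ (respectively $d_l$) is the relative forward jump distance from the position of the respective $\fjmp{}$ instruction to $\mathit{start}(j)$ (respectively $\mathit{start}(l)$); and if $t_{X_i} = \pcs{k}{a}{X_{j_1},\ldots,X_{j_k}}$, take $B_i$ to be $\pmrtst{k}{a} \conc \fjmp{d_{1}} \conc \cdots \conc \fjmp{d_{k}}$, with $d_m$ the relative distance from the $m$-th jump to $\mathit{start}(j_m)$. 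Because a linear recursive specification only ever refers to variables in $V$, every jump target is a well-defined block start, so all the $d$'s exist. One subtlety is that all jumps are strictly forward, whereas a recursion equation may point "backwards" to an earlier block (e.g. $X_3$ referring to $X_1$); this is handled in the standard way by wrapping the whole thing in the repetition operator: take $P = (B_1 \conc \cdots \conc B_N)\rep$, so that after the last block execution "wraps around" and a forward jump of the appropriate length reaches any block start. Using PGA4 (the unfolding law $(X \conc Y)\rep = X \conc (Y \conc X)\rep$) one can always make the finite prefix long enough that any required target lies forward of any jump.

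Next I would verify correctness of the extracted thread. Let $P = (B_1 \conc \cdots \conc B_N)\rep$ and consider, for each $i$, the closed \PGAmr\ term obtained by cyclically rotating to block $i$, i.e.\ $P_i = (B_i \conc \cdots \conc B_N \conc B_1 \conc \cdots \conc B_{i-1})\rep$; by PGA2--PGA4 this is derivably equal to $P$ suitably "entered at $\mathit{start}(i)$". Using the defining equations of Table~\ref{axioms-thread-extr}, I would show $\textr{P_i} = \rec{X_i}{E'}$, where $E'$ is the guarded recursive specification over \BTApcs\ with equations $X_i = t'_{X_i}$ and $t'_{X_i}$ obtained from $t_{X_i}$ by replacing each variable $X_j$ by $\textr{P_j}$. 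Concretely: for $t_{X_i} = \pcc{X_j}{a}{X_l}$, the equations $\textr{\ptst{a} \conc Y} = \pcc{\textr{Y}}{a}{\textr{\fjmp{2}\conc Y}}$ together with $\textr{\fjmp{1}\conc Y} = \textr{Y}$ and the jump-chain equations reduce $\textr{B_i \conc (\cdots)\rep}$ to $\pcc{\textr{P_j}}{a}{\textr{P_l}}$; for $t_{X_i} = \pcs{k}{a}{X_{j_1},\ldots,X_{j_k}}$, the equation $\textr{\pmrtst{k}{a}\conc Y} = \pcs{k}{a}{\textr{\fjmp{1}\conc Y},\ldots,\textr{\fjmp{k}\conc Y}}$ does the same; the cases $\halt$ and $\fjmp{0}$ give $\Stop$ and $\DeadEnd$. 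Hence the family $(\textr{P_i})_{i=1}^{N}$ is a solution of $E'$, and since $E'$ is guarded, RSP gives $\textr{P_i} = \rec{X_i}{E'}$. Finally, observing that $E'$ and $E$ present the same solutions — $E'$ is just $E$ with its variables renamed to the constants $\rec{X_i}{E}$, which by RDP/RSP satisfy $E$ — we get $\textr{P_1} = \rec{X_1}{E} = p$, so $P = P_1$ is the required term.

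**Main obstacle.** The routine parts are the two thread-extraction reductions above; the genuinely fiddly point is the bookkeeping of jump distances in the presence of only forward jumps: one must be careful that after choosing a linearization $B_1,\ldots,B_N$, wrapping in $\rep$, and possibly unfolding with PGA4 a sufficient number of times, every target $\mathit{start}(j)$ really is reachable by a single forward jump from the instruction that needs to jump there, and that no jump runs off the end of an unbounded-but-periodic sequence into an infinite jump chain. This is the step that needs the most care, though it is conceptually standard; a clean way to organize it is to first give the construction on the *infinite periodic* instruction sequence directly (where "forward" wraps around the period), define the $d$'s there, and only afterwards observe that such a periodic sequence is denoted by a closed \PGAmr\ term via PGA4.
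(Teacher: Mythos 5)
Your construction is correct and is in substance the proof the paper relies on: the paper disposes of this proposition by citing Proposition~2 of~\cite{PZ06a} and observing that the adaptations for multiple-reply test instructions are trivial, and your block-per-equation translation of a finite linear recursive specification into a term $(B_1 \conc \cdots \conc B_N)\rep$ --- with forward jumps wrapping around the period, the extra clause $\pmrtst{k}{a} \conc \fjmp{d_1} \conc \cdots \conc \fjmp{d_k}$ for postconditional switches, and correctness via the extraction equations plus RSP --- is exactly that construction spelled out. No gap.
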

\begin{proof}
This proposition generalizes one direction of Proposition~2
from~\cite{PZ06a} from \PGA\ to \PGAmr\ and from the projective limit
model of \BTA\ to an arbitrary model of \BTApcs.
However, the proof of that proposition is applicable to any model of
\BTA\ and the adaptations needed to take multiple-reply test
instructions and the interpretations of postconditional switching
operators into account are trivial.
\qed
\end{proof}

We proceed by making precise what it means that a process can only be in
a finite number of states.
We assume that a fixed but arbitrary model $\fM'$ of \ACPt\ with guarded
recursion has been given, we use the term process only for the elements
from the domain of $\fM'$, and we denote the interpretations of constants
and operators in $\fM'$ by the constants and operators themselves.

Let $p$ be a process.
Then the set of \emph{states} or \emph{subprocesses} of $p$,
written $\Sub(p)$, is inductively defined as follows:
\begin{itemize}
\item
$p \in \Sub(p)$;
\item
if $e \seqc q \in \Sub(p)$, then $q \in \Sub(p)$;
\item
if $e \seqc q \altc r \in \Sub(p)$, then $q \in \Sub(p)$.
\end{itemize}

Let $p$ be a process and let $\Act' \subseteq \Actt$.
Then $p$ is \emph{regular over} $\Act'$ if the following conditions are
satisfied:
\begin{iteml}
\item
$\Sub(p)$ is finite;
\item
for all $q \in \Sub(p)$ and $e \in \Actt$,
$e \seqc q \in \Sub(p)$ implies $e \in \Act'$;
\item
for all $q,r \in \Sub(p)$ and $e \in \Actt$,
$e \seqc q \altc r \in \Sub(p)$ implies $e \in \Act'$.
\end{iteml}
We say that $p$ is \emph{regular} if $p$ is regular over $\Actt$.

We will make use of the fact that being a regular process over $\Act$
coincides with being the solution of a finite guarded recursive
specification in which the right-hand sides of the recursion equations
are linear terms.
\emph{Linearity} of terms is inductively defined as follows:
\begin{iteml}
\item
$\dead$ is linear;
\item
if $e \in \Actt$, then $e$ is linear;
\item
if $e \in \Actt$ and $X$ is a variable, then $e \seqc X$ is linear;
\item
if $t$ and $t'$ are linear, then $t \altc t'$  is linear.
\end{iteml}
A \emph{linear recursive specification} over \ACPt\ is a guarded
recursive specification $E = \set{X = t_X \where X \in V}$ over \ACPt,
where each $t_X$ is linear.
\begin{proposition}
\label{prop-lin-rec-process}
Let $p$ be a process and let $\Act' \subseteq \Act$.
Then $p$ is regular over $\Act'$ iff there exists a finite linear
recursive specification $E$ over \ACPt\ in which only atomic actions
from $\Act'$ occur such that $p$ is the solution of $E$ for some
$X \in \vars(E)$.
\end{proposition}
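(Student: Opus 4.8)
The plan is to establish the two implications separately, following the classical correspondence between finite-state (``regular'') processes and finite linear guarded recursive specifications, in close analogy with the thread case of Proposition~\ref{prop-lin-rec-thread}.

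For the implication from right to left, I would start from $p = \rec{X}{E}$ for a finite linear recursive specification $E = \set{Y = t_Y \where Y \in V}$ over \ACPt\ in which only atomic actions from $\Act'$ occur. Since each $t_Y$ is linear, unfolding $\rec{Y}{E}$ once rewrites it to a term of the shape $\vAltc{i \in I} (a_i \seqc \rec{Y_i}{E}) \altc \vAltc{j \in J} b_j$, possibly with an extra $\dead$-summand, where each $a_i$ and $b_j$ occurs in $t_Y$ and hence lies in $\Act'$, and each $Y_i \in V$. The key step is then to show that the finite set $D = \set{\rec{Y}{E} \where Y \in V}$, which contains $p$, already contains $\Sub(p)$. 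I would do this by verifying that $D$ is closed under the two generating clauses of $\Sub$: if $d \in D$ and $d = e \seqc q$ or $d = e \seqc q \altc r$ holds in $\fM'$, then, comparing $d = \rec{Y}{E}$ with the head-normal form above and using that an action-prefixed process cannot be split as a non-trivial alternative composition and cannot coincide with a bare atomic action or with $\dead$, one obtains $e = a_i$ and $q = \rec{Y_i}{E}$ for some $i$; hence $q \in D$ and $e \in \Act'$. Finiteness of $\Sub(p)$ and the two side conditions on the actions occurring in $\Sub(p)$ follow immediately, so $p$ is regular over $\Act'$.

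For the converse, assume $p$ is regular over $\Act'$ and write $\Sub(p) = \set{p_1,\ldots,p_n}$ with $p_1 = p$. The heart of the argument is a head-normal-form lemma: every $p_i \in \Sub(p)$ satisfies an equation $p_i = \vAltc{(a,q) \in T_i} (a \seqc q) \altc \vAltc{b \in B_i} b$, with an optional $\dead$-summand and with the empty alternative composition read as $\dead$, for finite sets $T_i$ and $B_i$. The inductive definition of $\Sub$ then forces every such $q$ into $\Sub(p)$, and regularity over $\Act'$ forces every such $a$ and $b$ into $\Act'$. Granting the lemma, I would introduce a variable $X_i$ for each $p_i$ and put $E = \set{X_i = t_i \where 1 \le i \le n}$, where $t_i$ is obtained from the head-normal form of $p_i$ by replacing each residual process $q$ by the variable $X_j$ with $q = p_j$. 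Then each $t_i$ is linear, only atomic actions from $\Act'$ occur in $E$, and, since $\Act' \subseteq \Act$ rules out $\tau$, every variable occurrence in $E$ is properly guarded, so $E$ is a finite linear recursive specification over \ACPt. By construction $(p_1,\ldots,p_n)$ is a solution of $E$ in $\fM'$, and uniqueness of solutions of guarded recursive specifications gives $p = p_1 = \rec{X_1}{E}$, as required.

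The step I expect to be the main obstacle is the head-normal-form lemma in the converse direction, namely that every residual process of a regular process can be expanded once, in $\fM'$, into a sum of action-prefixes and bare atomic actions (plus possibly $\dead$). Once such an expansion is at hand, the facts that its prefixed continuations again belong to $\Sub(p)$ and that all actions in it belong to $\Act'$ are immediate from the inductive definitions of $\Sub$ and of regularity over $\Act'$; the work lies in producing the expansion. I would derive it from the elimination and expansion theory of \ACPt\ with guarded recursion (see~\cite{Fok00}), under which every process definable there reduces to a head-normal form of this shape. This is precisely where the argument runs parallel to Theorem~1 of~\cite{PZ06a} for threads and to the standard automata-style characterisation of regular processes; the adaptations relative to those arguments are routine.
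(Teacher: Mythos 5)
Your proposal is correct and follows essentially the same route as the paper, which proves this proposition by transferring the argument of Proposition~\ref{prop-lin-rec-thread}, i.e.\ the standard correspondence from Theorem~1 of~\cite{PZ06a}: states of a regular process become variables of a linear specification and, conversely, the constants $\rec{Y}{E}$ exhaust the subprocesses of a solution. You merely spell out the details (in particular the head-normal-form step and the role of $\Act' \subseteq \Act$ in securing guardedness without CFAR) that the paper leaves implicit.
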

\begin{proof}
The proof follows the same line as the proof of
Proposition~\ref{prop-lin-rec-thread}.
\qed
\end{proof}
\begin{unnumremark}
Proposition~\ref{prop-lin-rec-process} is concerned with processes that
are regular over $\Act$.
We can also prove that being a regular process over $\Actt$ coincides
with being the solution of a finite linear recursive specification over
\ACPt if we assume that the cluster fair abstraction rule~\cite{Fok00}
holds in the model $\fM'$.
However, we do not need this more general result.
\end{unnumremark}

All regular processes over $\AAct$ can be produced by the single-pass
instruction sequences considered in program algebra with multiple-reply
test instructions.
\begin{theorem}
\label{theorem-expressiveness}
For each process $p$ that is regular over $\AAct$, there exists a closed
\PGAmrpc\ term $P$ such that $p$ is the process denoted by
$\pextr{\textr{P}}$.
\end{theorem}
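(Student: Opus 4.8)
The plan is to produce $P$ by a detour through threads, chaining together the three propositions of this section. Let $p$ be regular over $\AAct$. By Proposition~\ref{prop-lin-rec-process} there is a finite linear recursive specification $E = \set{X = t_X \where X \in V}$ over \ACPt, in which only atomic actions from $\AAct$ occur, such that $p$ is the solution of $E$ for some $X_0 \in V$. First I would normalize the right-hand sides: using associativity and commutativity of $\altc$, the law $x \altc \dead = x$, and the law $x \seqc \tau = x$, each $t_X$ can be brought into the shape $t_X \equiv \dead$, or $t_X \equiv \vAltc{1 \leq k \leq n_X} s^X_k$ with $n_X \geq 1$ and each summand $s^X_k$ of the form $g^X_k \seqc Y^X_k$ or of the form $g^X_k$, where $g^X_k \in \AAct$ and $Y^X_k \in V$. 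Since this does not change the solutions of $E$, I assume $E$ already has this shape.

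Next I would translate $E$ into a finite linear recursive specification $E'$ over \BTApcspc. Take a fresh thread variable $X'$ for each $X \in V$, plus one more fresh variable $W$. Let $E'$ consist of: the equation $W = \Stop$; for each $X$ with $t_X \equiv \dead$, the equation $X' = \DeadEnd$; and for each remaining $X$, the equation $X' = \pcs{n_X}{\ac(g^X_1,\ldots,g^X_{n_X})}{\hat Y^X_1,\ldots,\hat Y^X_{n_X}}$, where $\hat Y^X_k = (Y^X_k)'$ if $s^X_k \equiv g^X_k \seqc Y^X_k$ and $\hat Y^X_k = W$ if $s^X_k \equiv g^X_k$. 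This $E'$ is a finite linear recursive specification over \BTApcspc\ in which only process construction actions occur, hence only basic actions from $\BAct$; by Proposition~\ref{prop-lin-rec-thread} its solution for $X_0'$ is a thread regular over $\BAct$, so by Proposition~\ref{prop-expressiveness} there is a closed \PGAmrpc\ term $P$ such that $\textr{P}$ denotes the solution of $E'$ for $X_0'$.

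It then remains to check $\pextr{\textr{P}} = p$. By definition $\pextr{\textr{P}} = \abstr{\set{\stp}}(\cpextr{\rec{X_0'}{E'}}) = \abstr{\set{\stp}}(\rec{X_0'}{E''})$ with $E'' = \set{X' = \cpextr{t_{X'}} \where X' = t_{X'} \in E'}$; reading $\cpextr{\ph}$ off Table~\ref{eqns-process-extr}, $E''$ consists of $W = \stp$, the equations $X' = \iact \seqc \dead$ coming from the deadlock cases, and, for the remaining $X$, the equations $X' = \vAltc{1 \leq k \leq n_X} g^X_k \seqc \hat Y^X_k$. I would finish by RSP: put $\bar X := \abstr{\set{\stp}}(\rec{X'}{E''})$ for $X \in V$ and verify that $(\bar X)_{X \in V}$ is a solution of $E$, using that each $g^X_k \neq \stp$, so that $\abstr{\set{\stp}}$ distributes over the alternatives and over the prefixes $g^X_k \seqc (\ph)$, together with $\abstr{\set{\stp}}(\stp) = \tau$ and $g \seqc \tau = g$ (so that the $W$-slots collapse to bare actions, matching the summands $s^X_k \equiv g^X_k$). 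Uniqueness of solutions then yields $\bar X_0 = \rec{X_0}{E} = p$, which is what is wanted.

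The step I expect to be the main obstacle is this last verification, and within it the treatment of $\DeadEnd$: because $\cpextr{\DeadEnd} = \iact \seqc \dead$ rather than $\dead$, the family $(\bar X)$ solves $E$ literally only when no equation $t_X \equiv \dead$ is present, so processes with reachable deadlock will need extra care — either a sharper choice of the translated specification, or a separate argument for the deadlocking subprocesses. The bookkeeping around termination, i.e.\ the interplay between the thread constant $\Stop$ and the ``$\stp$ turned into $\tau$'' convention on the process side, is the other point where one must be careful. Everything else is mechanical: the passage from $E$ to $E'$ is purely syntactic, and Propositions~\ref{prop-lin-rec-thread}, \ref{prop-lin-rec-process} and~\ref{prop-expressiveness} carry the structural content, translating back and forth between ``being regular'' and ``being the solution of a finite linear recursive specification.''
\qed
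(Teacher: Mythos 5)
Your proof takes essentially the same route as the paper's: chain Propositions~\ref{prop-lin-rec-process}, \ref{prop-lin-rec-thread} and~\ref{prop-expressiveness}, and translate each linear \ACPt\ equation into a single $n$-ary postconditional switch over one process construction action $\ac(\ldots)$, with $\Stop$ (your fresh $W$) filling the slots of the terminating summands. The loose end you flag --- states with $t_X \equiv \dead$, where $\cpextr{\DeadEnd} = \iact \seqc \dead \neq \dead$ --- is genuine but is not something your approach introduces: the paper's own construction silently subsumes that case as $k_i = l_i = 0$, where it would call for an ill-formed $0$-ary $\ac(\,)$, so the paper glosses over exactly the point you identify as needing extra care.
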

\begin{proof}
By Propositions~\ref{prop-lin-rec-thread}, \ref{prop-expressiveness}
and~\ref{prop-lin-rec-process}, it is sufficient to show that, for each
finite linear recursive specification $E$ over \ACPt\ in which only
atomic actions from $\AAct$ occur, there exists a finite linear
recursive specification $E'$ over \BTApcspc\ such that $\rec{X}{E} =
\pextr{\rec{X}{E'}}$ for all $X \in \vars(E)$.

Take the finite linear recursive specification $E$ over \ACPt\ that
consists of the recursion equations
\begin{ldispl}
X_i = e_{i 1} \seqc X_{i 1} \altc \ldots \altc e_{i k_i} \seqc X_{i k_i}
       \altc
      e'_{i 1} \altc \ldots \altc e'_{i l_i}\;,
\end{ldispl}
where $e_{i 1},\ldots,e_{i k_i},e'_{i 1},\ldots,e'_{i l_i} \in \AAct$,
for $i \in \set{1,\ldots n}$.
Then construct the finite linear recursive specification $E'$ over
\BTApcspc\ that consists of the recursion equations
\begin{ldispl}
X_i =
\pcs{k_i+l_i}
    {\ac(e_{i 1},\ldots,e_{i k_i},e'_{i 1},\ldots,e'_{i l_i})}
    {X_{i 1},\ldots,X_{i k_i},
     \underbrace{\Stop,\ldots,\Stop}_{l_i \,\mathrm{times}}}
\end{ldispl}
for $i \in \set{1,\ldots n}$.
It follows immediately from the definition of the process extraction
operation that $\rec{X}{E} = \pextr{\rec{X}{E'}}$ for all
$X \in \vars(E)$.
\qed
\end{proof}
Multiple-reply test instructions and postconditional switching have been
introduced because process construction instructions of the form
$\ac(e_1,\ldots,e_n)$ with $n > 2$ look to be necessary to obtain this
result.
However, a similar result can also be obtained for closed \PGAmrpc\
terms in which only basic instructions of the form $\ac(e_1,e_2)$ occur
if we assume that the cluster fair abstraction rule~\cite{Fok00} holds
in the model $\fM'$.
\begin{theorem}
\label{theorem-completeness}
Assume that CFAR (Cluster Fair Abstraction Rule) holds in $\fM'$.
Let $\tact \in \AAct$.
Then, for each process $p$ that is regular over
$\AAct \diff \set{\tact}$, there exists a closed \PGAmrpc\ term $P$ in
which only basic instructions of the form $\ac(e,\tact)$ occur such that
$\tau \seqc p$ is the process denoted by
$\tau \seqc \abstr{\set{\tact}}(\pextr{\textr{P}})$.
\end{theorem}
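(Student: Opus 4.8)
The plan is to reduce, as in the proof of Theorem~\ref{theorem-expressiveness}, to a statement purely about recursive specifications, and then to eliminate the $n$-ary process construction actions (with $n>2$) in favour of binary ones $\ac(e,\tact)$ at the cost of introducing extra internal activity that gets absorbed by abstraction of $\tact$ together with CFAR. Concretely, I would start from a process $p$ that is regular over $\AAct \diff \set{\tact}$. By Proposition~\ref{prop-lin-rec-process} there is a finite linear recursive specification $E$ over \ACPt, in which only atomic actions from $\AAct \diff \set{\tact}$ occur, with $p = \rec{X}{E}$ for some $X$. Writing the equations of $E$ in the standard form
\begin{ldispl}
X_i = e_{i 1} \seqc X_{i 1} \altc \ldots \altc e_{i k_i} \seqc X_{i k_i}
       \altc e'_{i 1} \altc \ldots \altc e'_{i l_i}
\end{ldispl}
for $i \in \set{1,\ldots,n}$, with all $e_{ij},e'_{ij} \in \AAct \diff \set{\tact}$, it suffices (by Propositions~\ref{prop-lin-rec-thread} and~\ref{prop-expressiveness}) to construct a closed \PGAmrpc\ term $P$, using only basic instructions of the form $\ac(e,\tact)$, whose extracted thread $\textr{P}$ is the solution for the appropriate variable of a finite linear recursive specification $E'$ over \BTApcspc\ such that $\tau \seqc \rec{X}{E} = \tau \seqc \abstr{\set{\tact}}(\pextr{\rec{X}{E'}})$.

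The construction of $E'$ is the heart of the argument. Each $m$-way nondeterministic branch $e_1 \seqc Y_1 \altc \ldots \altc e_m \seqc Y_m$ should be simulated by a little binary \emph{decision tree}: fresh auxiliary variables $Z_1,\ldots,Z_{m-1}$ with equations of the form $Z_j = \pcs{2}{\ac(e_j,\tact)}{Y_j, Z_{j+1}}$ for $j < m-1$ and $Z_{m-1} = \pcs{2}{\ac(e_{m-1},\tact)}{Y_{m-1}, e_m \bapf Y_m}$ (with the $e'_{ij}$ summands handled by using $\Stop$ in place of a variable, exactly as in the proof of Theorem~\ref{theorem-expressiveness}). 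Under $\cpextr{\ph}$ each such equation becomes $Z_j = e_j \seqc \cpextr{Y_j} \altc \tact \seqc \cpextr{Z_{j+1}}$; chaining these and applying $\abstr{\set{\tact}}$ turns the tree into a single $\tau$-guarded cluster whose observable exits are precisely $e_1 \seqc \cdots,\ldots,e_m \seqc \cdots$. Prefixing the whole process by $\tau$ (which is why the statement speaks of $\tau \seqc p$ rather than $p$) ensures that the initial state also sits behind a $\tau$, so that CFAR, applied to each cluster of $\tact$-labelled exits, collapses it to the intended alternative composition. This yields $\tau \seqc \abstr{\set{\tact}}(\pextr{\rec{X}{E'}}) = \tau \seqc \rec{X}{E} = \tau \seqc p$ as required. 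Finally, since $E'$ is finite and linear over \BTApcspc, Proposition~\ref{prop-lin-rec-thread} gives that its solution is a regular thread over $\BAct$, and Proposition~\ref{prop-expressiveness} produces the desired closed \PGAmrpc\ term $P$; by inspection of that construction, $P$ uses only the basic instructions occurring in $E'$, i.e.\ only instructions of the form $\ac(e,\tact)$.

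The main obstacle I expect is getting the CFAR bookkeeping exactly right: CFAR applies to a \emph{conservative} cluster of internal transitions, and one must check that the $\tact$-exits introduced by the decision trees indeed form clusters of the right shape (no internal transition from such a state back into an earlier part of the system other than within the tree, and every tree reachable state has at least one exit), so that after $\abstr{\set{\tact}}$ the hypotheses of CFAR are literally met. A secondary subtlety is that $E'$ must still be a \emph{guarded} linear recursive specification over \BTApcspc: since each auxiliary equation $Z_j = \pcs{2}{\ac(e_j,\tact)}{\cdots}$ is action-guarded, this is immediate, but one should note it explicitly so that Proposition~\ref{prop-lin-rec-thread} is applicable. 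Everything else — the computation of $\cpextr{\ph}$ on the tree, the absorption of the $\stp$ action, and the extraction of $P$ from $E'$ — is routine given the definitions in Section~\ref{sect-process-extr} and the cited propositions.
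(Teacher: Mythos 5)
Your overall strategy coincides with the paper's: reduce via Propositions~\ref{prop-lin-rec-thread}, \ref{prop-expressiveness} and~\ref{prop-lin-rec-process} to a statement about linear recursive specifications, simulate each $m$-way branch by a chain of binary $\ac(e,\tact)$ choices, and collapse the resulting $\tact$-activity with $\abstr{\set{\tact}}$ and CFAR. But the shape of your chain is wrong in a way that breaks the argument. You end the chain with $Z_{m-1} = \pcs{2}{\ac(e_{m-1},\tact)}{Y_{m-1}, e_m \bapf Y_m}$, so after process extraction the $\tact$-transitions form a \emph{linear} chain $Z_1 \rightarrow Z_2 \rightarrow \cdots$ with no cycle. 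Such a chain is not a conservative cluster: the exit $e_1 \seqc \cdots$ available at $Z_1$ is not reachable from $Z_2$, so the hypothesis of CFAR (every exit reachable from every state of the cluster) fails, and no other $\tau$-law rescues you, since already $\tau \seqc (e_1 \seqc x \altc \tau \seqc e_2 \seqc y) \neq \tau \seqc (e_1 \seqc x \altc e_2 \seqc y)$ in \ACPt: the silent step preempts the choice and discards the $e_1$-alternative. Your own ``bookkeeping'' checklist (``every tree reachable state has at least one exit'') is not the condition CFAR needs. The paper's construction differs in exactly this one detail: the last fall-through of the chain loops \emph{back to the head variable} $X_i$ (the innermost term is $\pcc{\Stop}{\ac(e'_{il_i},\tact)}{X_i}$), so the $\tact$-transitions form a cycle through all states of the chain, the cluster is conservative, and CFAR legitimately collapses it to the alternative composition of all exits. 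Without that loop-back the constructed process is genuinely not equal to $\tau \seqc p$.

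Two secondary defects in the same equation: $e_m \bapf Y_m$ treats the atomic action $e_m \in \AAct$ as a basic action, which it is not in \BTApcspc; the nearest legal substitute $\ac(e_m) \bapf Y_m$ uses a basic action not of the form $\ac(e,\tact)$, violating the restriction in the statement of the theorem. And a right-hand side of the form $\pcs{2}{a}{Y, e_m \bapf Y_m}$ is not linear (its second argument is not a variable), so Proposition~\ref{prop-lin-rec-thread} would not apply as written. Both problems disappear once you adopt the cyclic chain, in which every node is of the form $\pcc{\cdot}{\ac(e,\tact)}{\cdot}$ with variables (or $\Stop$) in the argument positions.
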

\begin{proof}
By Propositions~\ref{prop-lin-rec-thread}, \ref{prop-expressiveness}
and~\ref{prop-lin-rec-process} and the definition of the thread
extraction operation, it is sufficient to show that, for each finite
linear recursive specification $E$ over \ACPt\ in which only atomic
actions from $\AAct \diff \set{\tact}$ occur, there exists a finite
linear recursive specification $E'$ over \BTApcspc\ in which only basic
actions of the form $\ac(e,\tact)$ occur such that
$\tau \seqc \rec{X}{E} =
 \tau \seqc \abstr{\set{\tact}}(\pextr{\rec{X}{E'}})$
for all $X \in \vars(E)$.

Take the finite linear recursive specification $E$ over \ACPt\ that
consists of the recursion equations
\begin{ldispl}
X_i = e_{i 1} \seqc X_{i 1} \altc \ldots \altc e_{i k_i} \seqc X_{i k_i}
       \altc
      e'_{i 1} \altc \ldots \altc e'_{i l_i}\;,
\end{ldispl}
where
$e_{i 1},\ldots,e_{i k_i},e'_{i 1},\ldots,e'_{i l_i} \in
 \AAct \diff \set{\tact}$,
for $i \in \set{1,\ldots n}$.
Then construct the finite linear recursive specification $E'$ over
\BTApcspc\ that consists of the recursion equations
\begin{ldispl}
X_i =
\pcc{X_{i1}}{\ac(e_{i1},\tact)}
    {(\ldots(\pcc{X_{ik_i}}{\ac(e_{ik_i},\tact)}
                 {\\ \hsp{3.1} (\pcc{\Stop}{\ac(e'_{i1},\tact)}
                       {(\ldots(\pcc{\Stop}{\ac(e'_{il_i},\tact)}
                                    {X_i})\ldots)})})\ldots)}
\end{ldispl}
for $i \in \set{1,\ldots n}$;
and the finite linear recursive specification $E''$ over \ACPt\ that
consists of the recursion equations
\begin{ldispl}
\begin{aeqns}
X_i       & = & e_{i 1}   \seqc X_{i 1}   \altc \tact \seqc Y_{i 2}\;,
\\
Y_{i 2}   & = & e_{i 2}   \seqc X_{i 2}   \altc \tact \seqc Y_{i 3}\;,
\\ & \vdots & \\
Y_{i k_i} & = & e_{i k_i} \seqc X_{i k_i} \altc \tact \seqc Z_{i 1}\;,
\end{aeqns}
\qquad \qquad
\begin{aeqns}
Z_{i 1}   & = & e'_{i 1}   \altc \tact \seqc Z_{i 2}\;,
\\
Z_{i 2}   & = & e'_{i 2}   \altc \tact \seqc Z_{i 3}\;,
\\ & \vdots & \\
Z_{i l_i} & = & e'_{i l_i} \altc \tact \seqc X_i\;,
\end{aeqns}
\end{ldispl}
where $Y_{i 2},\ldots,Y_{i k_i},Z_{i 1},\ldots,Z_{i l_i}$ are fresh
variables, for $i \in \set{1,\ldots n}$.
It follows immediately from the definition of the process extraction
operation that $\pextr{\rec{X}{E'}} = \rec{X}{E''}$ for all
$X \in \vars(E)$.
Moreover, it follows from CFAR that
$\tau \seqc \rec{X}{E} = \tau \seqc \abstr{\set{\tact}}(\rec{X}{E''})$
for all $X \in \vars(E)$.
Hence,
$\tau \seqc \rec{X}{E} =
 \tau \seqc \abstr{\set{\tact}}(\pextr{\rec{X}{E'}})$
for all $X \in \vars(E)$.
\qed
\end{proof}

\section{Services}
\label{sect-services}

An instruction sequence under execution may make use of services.
That is, certain instructions may be executed for the purpose of having
the behaviour produced by the instruction sequence affected by a
service that takes those instructions as commands to be processed.
Likewise, a thread may perform certain actions for the purpose of having
itself affected by a service that takes those actions as commands to be
processed.
The processing of an action may involve a change of state of the service
and at completion of the processing of the action the service returns a
reply value to the thread.
The reply value determines how the thread proceeds.
In this section, we first review the use operators, which are concerned
with threads making such use of services, and then extend the process
extraction operation to the use operators.
The use operators can be used in combination with the thread extraction
operation from Section~\ref{sect-thread-extr} to describe the behaviour
produced by instruction sequences that make use of services.

\subsection{Use Operators}
\label{subsect-use-operator}

A \emph{service} $H$ consists of
\begin{iteml}
\item
a set $S$ of \emph{states};
\item
an \emph{effect} function $\funct{\eff}{\Meth \x S}{S}$;
\item
a \emph{yield} function
$\funct{\yld}{\Meth \x S}{\Nat}$;
\item
an \emph{initial state} $s_0 \in S$;
\end{iteml}
satisfying the following condition:
\begin{ldispl}
\Forall{m \in \Meth, s \in S}
{(\yld(m,s) = 0 \Implies
  \Forall{m' \in \Meth}{\yld(m',\eff(m,s)) = 0})}\;.
\end{ldispl}
The set $S$ contains the states in which the service may be, and the
functions $\eff$ and $\yld$ give, for each method $m$ and state $s$, the
state and reply, respectively, that result from processing $m$ in state
$s$.
By the condition imposed on services, once the service has returned $0$
as reply, it keeps returning $0$ as reply.

Let $H = \tup{S,\eff,\yld,s_0}$ be  a service and let $m \in \Meth$.
Then
the \emph{derived service} of $H$ after processing $m$, written
$\derive{m}H$, is the service $\tup{S,\eff,\yld,\eff(m,s_0)}$; and
the \emph{reply} of $H$ after processing $m$, written $H(m)$, is
$\yld(m,s_0)$.

When a thread makes a request to the service to process $m$:
\begin{iteml}
\item
if $H(m) \neq 0$, then the request is accepted, the reply is $H(m)$, and
the service proceeds as $\derive{m}H$;
\item
if $H(m) = 0$, then the request is rejected and the service proceeds as
a service that rejects any request.
\end{iteml}

We introduce the sort $\Serv$ of \emph{services} and,
for each $f \in \Foci$, the binary \emph{use} operator
$\funct{\use{\ph}{f}{\ph}}{\Thr \x \Serv}{\Thr}$.
The axioms for these operators are given in Table~\ref{axioms-use}.%
\begin{table}[!t]
\caption{Axioms for use operators}
\label{axioms-use}
\begin{eqntbl}
\begin{saxcol}
\use{\Stop}{f}{H} = \Stop                           & & \axiom{U1} \\
\use{\DeadEnd}{f}{H} = \DeadEnd                     & & \axiom{U2} \\
\use{(\pcc{x}{\Tau}{y})}{f}{H} =
\pcc{(\use{x}{f}{H})}{\Tau}{(\use{y}{f}{H})}        & & \axiom{U3} \\
\use{(\pcc{x}{g.m}{y})}{f}{H} =
\pcc{(\use{x}{f}{H})}{g.m}{(\use{y}{f}{H})}
                                      & \mif f \neq g & \axiom{U4} \\
\use{(\pcc{x}{f.m}{y})}{f}{H} = \Tau \bapf (\use{x}{f}{\derive{m}H})
                                      & \mif H(m) = 1 & \axiom{U5} \\
\use{(\pcc{x}{f.m}{y})}{f}{H} = \Tau \bapf (\use{y}{f}{\derive{m}H})
                                      & \mif H(m) = 2 & \axiom{U6} \\
\use{(\pcc{x}{f.m}{y})}{f}{H} = \DeadEnd
                                      & \mif H(m) = 0 & \axiom{U7} \\
\multicolumn{2}{@{}l@{\;\;}}{
\use{(\pcc{x}{\ac(e_1,\ldots,e_n)}{y})}{f}{H} =
\pcc{(\use{x}{f}{H})}{\ac(e_1,\ldots,e_n)}{(\use{y}{f}{H})}}
                                                      & \axiom{U8} \\
\multicolumn{2}{@{}l@{\;\;}}{
\use{\pcs{k}{\Tau}{x_1,\ldots,x_k}}{f}{H} =
\pcs{k}{\Tau}{\use{x_1}{f}{H},\ldots,\use{x_k}{f}{H}}}
                                                      & \axiom{U9} \\
\use{\pcs{k}{g.m}{x_1,\ldots,x_k}}{f}{H} =
{} \\ \qquad
\pcs{k}{g.m}{\use{x_1}{f}{H},\ldots,\use{x_k}{f}{H}}
              & \mif f \neq g                         & \axiom{U10} \\
\use{\pcs{k}{f.m}{x_1,\ldots,x_k}}{f}{H} =
\Tau \bapf (\use{x_i}{f}{\derive{m}H})
              & \mif H(\seq{m}) = i \land i \in [1,k] & \axiom{U11} \\
\use{\pcs{k}{f.m}{x_1,\ldots,x_k}}{f}{H} = \DeadEnd
              & \mif H(\seq{m}) \notin [1,k]          & \axiom{U12} \\
\multicolumn{2}{@{}l@{\;\;}}{
\use{\pcs{k}{\ac(e_1,\ldots,e_n)}{x_1,\ldots,x_k}}{f}{H} =
\pcs{k}{\ac(e_1,\ldots,e_n)}{\use{x_1}{f}{H},\ldots,\use{x_k}{f}{H}}}
                                                      & \axiom{U13} \\
\proj{n}{\use{x}{f}{H}} = \proj{n}{\use{\proj{n}{x}}{f}{H}}
                                                    & & \axiom{U14}
\end{saxcol}
\end{eqntbl}
\end{table}
Intuitively, $\use{p}{f}{H}$ is the thread that results from processing
all actions performed by thread $p$ that are of the form $f.m$ by
service $H$.
When a basic action of the form $f.m$ performed by thread $p$ is
processed by service $H$, it is turned into the basic action $\Tau$ and
postconditional composition or postconditional switch is removed in
favour of basic action prefixing on the basis of the reply value
produced.

We add the use operators to \PGAmrpc\ as well.
We will only use the extension in combination with the thread extraction
operation $\textr{\ph}$ and define
$\textr{\use{P}{f}{H}} = \use{\textr{P}}{f}{H}$.
Hence, $\textr{\use{P}{f}{H}}$ denotes the thread produced by $P$ if $P$
makes use of $H$.
If $H$ is a service such as an unbounded counter, an unbounded stack or
a Turing tape, then a non-regular thread may be produced.

\subsection{Extending Process Extraction to the Use Operators}
\label{subsect-process-extr-use}

In order to extend the process extraction operation to the use
operators, we need an extension of \ACPt\ with action renaming
operators.
The unary action renaming operator $\aren{R}$, for
$\funct{R}{\Actt}{\Actt}$ such that $R(\tau) = \tau$, can be explained
as follows: $\aren{R}(p)$ behaves as $p$ with each atomic action
replaced according to $R$.
The axioms for action renaming are given in~\cite{Fok00}.
We write $\aren{e' \mapsto e''}$ for the renaming operator $\aren{R}$
with $R$ defined by $R(e') = e''$ and $R(e) = e$ if $e \neq e'$.

We also need to define a set $A_f \subseteq \Act$ and a function
$\funct{R_f}{\Actt}{\Actt}$ for each $f \in \Foci$:
\begin{ldispl}
A_f =
\set{\snd_f(d) \where d \in \Meth \union \Nat} \union
\set{\rcv_f(d) \where d \in \Meth \union \Nat}\;;
\end{ldispl}%
for all $e \in \Actt$, $m \in \Meth$ and $r \in \Nat$:
\begin{ldispl}
\begin{aeqns}
R_f(\snd_\serv(r)) = \snd_f(r)\;, \\
R_f(\rcv_\serv(m)) = \rcv_f(m)\;, \\
R_f(e)             = e
 & & \mif \AND{r' \in \Nat}{e \neq \snd_\serv(r')} \land
          \AND{m' \in \Meth}{e \neq \rcv_\serv(m')}\;.
\end{aeqns}
\end{ldispl}

The additional defining equations for the process extraction operation
concerning the use operators are given in
Table~\ref{eqns-process-extr-add},%
\begin{table}[!t]
\caption{Additional defining equations for process extraction operation}
\label{eqns-process-extr-add}
\begin{eqntbl}
\begin{eqncol}
\cpextr{\use{t}{f}{H}} =
\aren{\stp^* \mapsto \stp}
 (\encap{\set{\stp,\ol{\stp}}}
   (\encap{A_f}(\cpextr{t} \parc \aren{R_f}(\cpextr{H}))))
\eqnsep
\cpextr{H} =
\rec{X_H}
 {\{X_{H'} =
  \vAltc{m \in \Meth}
   \rcv_\serv(m) \seqc \snd_\serv(H'(m)) \seqc X_{\derive{m} H'}
   \altc \ol{\stp} \where H' \in \rDelta(H)\}}
\vspace*{.5ex}
\end{eqncol}
\end{eqntbl}
\end{table}
where $\rDelta(H)$ is inductively defined as follows:
\begin{iteml}
\item
$H \in \rDelta(H)$;
\item
if $m \in \Meth$ and $H' \in \rDelta(H)$, then
$\derive{m} H' \in \rDelta(H)$.
\end{iteml}

The extended process extraction operation preserves the axioms for the
use operators.
Owing to the presence of axiom schemas with semantic side conditions in
Table~\ref{axioms-use}, the axioms for the use operators include proper
axioms and axioms that have a semantic side condition of the form
$H(m) = n$.
By that, the precise formulation of the preservation result is somewhat
complicated.
\begin{proposition}
\label{prop-preservation-axioms-use}
\mbox{}
\begin{enuml}
\item
Let $\phi$ be a proper axiom for the use operators.
Then $\transl{\phi}$ is derivable from the axioms of \ACPt\ with action
renaming and guarded recursion.
\item
Let $\phi \;\mif \psi$ be an axiom with semantic side condition
for the use operators.
Then $\transl{\phi}$ is derivable from the axioms of \ACPt\ with action
renaming and guarded recursion under the assumption that $\psi$
holds.
\end{enuml}
\end{proposition}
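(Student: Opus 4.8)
The plan is to handle the two parts separately, in each case reducing the claim to a routine verification that the two sides of the translated axiom denote the same process. For a proper axiom $\phi$ of the form $t_1 = t_2$ (that is, one without a semantic side condition — these are U1, U2, U3, U4, U8, U9, U10, U13, and U14), I would first unfold $\transl{\phi}$ to $\pextr{t_1} = \pextr{t_2}$, i.e. $\abstr{\set{\stp}}(\cpextr{t_1}) = \abstr{\set{\stp}}(\cpextr{t_2})$, and then push the computation down to $\cpextr{\ph}$: it suffices to show $\cpextr{t_1} = \cpextr{t_2}$ is derivable from the axioms of \ACPt\ with action renaming and guarded recursion. Each such identity is obtained by applying the defining equation $\cpextr{\use{t}{f}{H}} = \aren{\stp^* \mapsto \stp}(\encap{\set{\stp,\ol{\stp}}}(\encap{A_f}(\cpextr{t} \parc \aren{R_f}(\cpextr{H}))))$ to both sides, then using the expansion theorem of \ACPt\ to compute the parallel composition of $\cpextr{t}$ with the recursively specified process $\aren{R_f}(\cpextr{H})$, and finally observing that the encapsulation and renaming operators distribute over the resulting alternative composition exactly as the use operator is axiomatised to distribute in U1--U14. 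For U14 the projection axioms P0--P3 together with the corresponding projection laws in \ACPt\ do the work.

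For an axiom with semantic side condition — i.e. U5, U6, U7 (where the side condition is $H(m) = 1$, $H(m) = 2$, $H(m) = 0$ respectively) and U11, U12 (with $H(\seq{m}) = i \land i \in [1,k]$ and $H(\seq{m}) \notin [1,k]$) — the argument is the same computation, but now the assumption $\psi$ is used to evaluate $\snd_\serv(H'(m))$ inside $\cpextr{H}$ at the relevant point. Concretely, for U5 under the assumption $H(m) = 1$: expanding $\cpextr{H}$, the summand $\rcv_\serv(m) \seqc \snd_\serv(H(m)) \seqc X_{\derive{m}H}$ becomes $\rcv_\serv(m) \seqc \snd_\serv(1) \seqc X_{\derive{m}H}$; after renaming by $R_f$ this synchronises with the $\snd_f(m)$ produced by $\cpextr{\pcc{x}{f.m}{y}}$ via $\snd_f(m) \commm \rcv_f(m) = \iact$, then $\snd_f(1) \commm \rcv_f(1) = \iact$ selects the first branch, all $A_f$-actions in the surviving term are internal $\iact$'s, encapsulation by $A_f$ and $\set{\stp,\ol{\stp}}$ leaves exactly $\iact \seqc \iact \seqc \cpextr{\use{x}{f}{\derive{m}H}}$, and the outer abstraction of $\stp$ together with axiom T1's process-side counterpart $\cpextr{\pcc{t_1}{\Tau}{t_2}} = \iact \seqc \iact \seqc \cpextr{t_1}$ matches $\cpextr{\Tau \bapf (\use{x}{f}{\derive{m}H})}$. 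The cases U6, U7, U11, U12 are entirely analogous; for U7 and U12 the side condition forces a mismatch so that encapsulation kills the communicating summand and leaves $\iact \seqc \dead = \cpextr{\DeadEnd}$.

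The main obstacle is bookkeeping rather than conceptual: one must check that the finite recursive specification defining $\cpextr{\use{t}{f}{H}}$ — whose variables range over $\rDelta(H)$, which is finite precisely when $H$ is, but which in general may be infinite — still admits the expansion-theorem manipulation, and that the guardedness needed to invoke RSP is preserved after applying $\encap{A_f}$, $\encap{\set{\stp,\ol{\stp}}}$ and $\aren{\stp^* \mapsto \stp}$ (it is, because every summand of $\cpextr{H}$ begins with an atomic action and each summand of $\cpextr{t}$ that is not itself guarded reduces, by the \ACPt\ axioms, to a guarded term, exactly as in the definition of guarded recursive specification over \ACPt). Once guardedness is in hand, uniqueness of solutions lets one conclude the two \ACPt\ terms equal by exhibiting a common guarded specification they both satisfy. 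I would therefore organise the proof as: (i) fix $f$ and $H$ and record the guarded specification for $\cpextr{\use{\ph}{f}{H}}$; (ii) for each axiom, expand both sides via that specification and the expansion theorem; (iii) use the side condition (when present) to evaluate the yield, and match summands; (iv) invoke RSP. No step beyond (i) requires more than the standard \ACPt\ laws and the defining equations of Tables~\ref{eqns-process-extr} and~\ref{eqns-process-extr-add}, so the proof is, as the analogous Proposition~\ref{prop-preservation-axioms} put it, essentially routine.
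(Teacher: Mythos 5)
Your proposal is correct and follows essentially the same route as the paper: the paper's proof is a one-case sketch (for the axiom with side condition $H(m)=1$, which it mislabels as U4) that applies the defining equations of $\cpextr{\ph}$ to both sides and then uses the axioms of \ACPt\ with action renaming and guarded recursion together with the side condition to match the two expansions --- exactly your steps (i)--(iv). Your additional remarks on guardedness, RSP and the possible infinity of $\rDelta(H)$ go beyond what the paper records but do not change the argument.
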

\begin{proof}
The proof is straightforward.
We sketch the proof for axiom U4, writing $E_H$ for
$\{X_{H'} =
   \vAltc{m \in \Meth}
    \rcv_\serv(m) \seqc \snd_\serv(H'(m)) \seqc X_{\derive{m} H'}
   \altc \ol{\stp} \where H' \in \rDelta(H)\}$.
By the definition of the process extraction operation, it is sufficient
to show that
$\cpextr{\use{(\pcc{x}{f.m}{y})}{f}{H}} =
 \cpextr{\Tau \bapf (\use{x}{f}{\derive{m}H})}$
is derivable under the assumption that $H(m) = 1$ holds.
In outline, this goes as follows:
\begin{trivlist}
\item
$
\begin{geqns}
\cpextr{\use{(\pcc{x}{f.m}{y})}{f}{H}}
\\ \; {} =
\aren{\stp^* \mapsto \stp}
 (\encap{\set{\stp,\ol{\stp}}}
   (\encap{A_f}
     (\snd_f(m) \seqc
      (\rcv_f(1) \seqc x \altc \rcv_f(2) \seqc y) \parc
      \aren{R_f}(\rec{X_H}{E_H}))))
\\ \; {} =
\iact \seqc \iact \seqc
\aren{\stp^* \mapsto \stp}
 (\encap{\set{\stp,\ol{\stp}}}
   (\encap{A_f}(x \parc \aren{R_f}(\rec{X_{\derive{m}H}}{E_H}))))
\\ \; {} =
\cpextr{\Tau \bapf (\use{x}{f}{\derive{m}H})}\;.
\end{geqns}
$
\end{trivlist}
In the first and third step, we apply defining equations of
$\cpextr{\ph}$.
In the second step, we apply axioms of \ACPt\ with action renaming and
guarded recursion, and use the assumption that $H(m) = 1$.
\qed
\end{proof}

Let $P$ be a closed \PGAmrpc\ term and $H$ be a service.
Then $\pextr{\textr{\use{P}{f}{H}}}$ denotes the process produced by $P$
if $P$ makes use of $H$.
Instruction sequences that make use of services such as unbounded
counters, unbounded stacks or Turing tapes are interesting because they
may produce non-regular processes.

\section{\PGLDmr\ Programs and the Use of Boolean Registers}
\label{sect-PGLDmr-BR}

In this section, we show that all regular processes can also be produced
by programs written in a program notation which is close to existing
assembly languages, and even by programs in which no atomic action
occurs more than once.
The latter result requires programs that make use of Boolean registers.

\subsection{The Program Notation \PGLDmr}
\label{subsect-PGLDmr}

A hierarchy of program notations rooted in program algebra is introduced
in~\cite{BL02a}.
One program notation that belongs to this hierarchy is \PGLD, a very
simple program notation which is close to existing assembly languages.
It has absolute jump instructions and no explicit termination
instruction.
Here, we introduce \PGLDmr, an extension of \PGLD\ with multiple-reply
test instructions.

In \PGLDmr, like in \PGAmr, it is assumed that there is a fixed but
arbitrary finite set of \emph{basic instructions} $\BInstr$.
The primitive instructions of \PGLDmr\ differ from the primitive
instructions of \PGAmr\ as follows: for each $l \in \Nat$, there is
an \emph{absolute jump instruction}~$\ajmp{l}$ instead of a forward jump
instruction~$\fjmp{l}$.
\PGLDmr\ programs have the form $u_1;\ldots;u_k$, where $u_1,\ldots,u_k$
are primitive instructions of \PGLDmr.

The effects of all instructions in common with \PGAmr\ are as in \PGAmr\
with one difference: if there is no next instruction to be executed,
termination occurs.
The effect of an absolute jump instruction $\ajmp{l}$ is that execution
proceeds with the $l$-th instruction of the program concerned.
If $\ajmp{l}$ is itself the $l$-th instruction, then deadlock occurs.
If $l$ equals $0$ or $l$ is greater than the length of the program, then
termination occurs.

We define the meaning of \PGLDmr\ programs by means of a function
$\pgldmrpga$ from the set of all \PGLDmr\ programs to the set of all
closed \PGAmr\ terms.
This function is defined by
\begin{ldispl}
\pgldmrpga(u_1 \conc \ldots \conc u_k) =
(\phi_1(u_1) \conc \ldots \conc \phi_k(u_k) \conc
 \halt \conc \halt)\rep\;,
\end{ldispl}%
where the auxiliary functions $\phi_j$ from the set of all primitive
instructions of \PGLDmr\ to the set of all primitive instructions of
\PGAmr\ are defined as follows ($1 \leq j \leq k$):
\begin{ldispl}
\begin{aceqns}
\phi_j(\ajmp{l}) & = & \fjmp{l-j}       & \mif j \leq l \leq k\;, \\
\phi_j(\ajmp{l}) & = & \fjmp{k+2-(j-l)} & \mif 0   <  l   <  j\;, \\
\phi_j(\ajmp{l}) & = & \halt            & \mif l = 0 \lor l > k\;, \\
\phi_j(u)        & = & u
                    & \mif u\; \mathrm{is\;not\;a\;jump\;instruction}\;.
\end{aceqns}
\end{ldispl}

\PGLDmr\ is as expressive as \PGAmr.
\begin{proposition}
\label{prop-express-PGLDmr}
For each closed \PGAmr\ term $P$, there exists a \PGLDmr\ program $P'$
such that $\textr{P} =\textr{\pgldmrpga(P')}$.
\end{proposition}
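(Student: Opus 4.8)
The plan is to show that the translation function $\pgldmrpga$ sends any closed \PGAmr\ term to a \PGLDmr\ program with the same extracted thread, by first normalising the given \PGAmr\ term and then exhibiting a concrete \PGLDmr\ program whose image under $\pgldmrpga$ has the same thread. First I would use the structure theory of \PGA\ recalled in Section~\ref{subsect-PGA}: every closed \PGAmr\ term is derivably equal to one of the form $P$ or $P \conc Q\rep$ with $P,Q$ repetition-free. Since $\textr{\ph}$ is defined on these forms and respects instruction sequence equivalence, it suffices to treat a term of the shape $(u_1 \conc \dots \conc u_m \conc v_1 \conc \dots \conc v_n)\rep$ (a finite instruction sequence being subsumed by padding with $\halt$ and taking $n$ copies of a trivial tail, or by appealing to PGA3). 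The idea is then to reindex this periodic sequence as a \PGLDmr\ program $P' = w_1;\dots;w_k$, converting each forward jump $\fjmp{l}$ occurring at position $j$ into the absolute jump $\ajmp{j+l}$ when that target lies inside the program, and into an absolute jump that wraps around to the appropriate position of the (conceptually unrolled) periodic part otherwise — and converting $\halt$ and the no-successor situation consistently, using that in \PGLDmr\ "no next instruction" means termination, exactly as $\halt$ does under $\textr{\ph}$.

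The key steps, in order: (1) reduce to a canonical periodic closed \PGAmr\ term via PGA1--PGA4; (2) define $P'$ by taking the list of primitive instructions of the repeating block, translating jumps to absolute jumps targeting the corresponding absolute position, where a jump that would leave the repeating block is redirected modulo the block length to the position it reaches on the next pass; (3) verify that $\pgldmrpga(P')$, when unfolded via the definition of $\pgldmrpga$ and the $\phi_j$, is instruction-sequence equivalent to the canonical form we started from — here the three clauses of $\phi_j$ (in-range absolute jump, wrap-around absolute jump $\fjmp{k+2-(j-l)}$, and $l=0 \lor l>k$ giving $\halt$) are precisely set up so that the $\conc \halt \conc \halt$ padding and the outer $\rep$ simulate the original periodic behaviour; (4) conclude by applying $\textr{\ph}$ to both sides and invoking that $\textr{\ph}$ is well-defined on instruction-sequence-equivalence classes, including the rule about infinite jump chains, so that $\textr{P} = \textr{\pgldmrpga(P')}$.

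The main obstacle I expect is bookkeeping with the jump offsets at the "seam" between the non-repeating prefix and the repeating block, and the off-by-constants introduced by the $\halt \conc \halt$ padding in the definition of $\pgldmrpga$: one must check that a forward jump in the original term whose target crosses the period boundary is mapped to exactly the absolute position that, after $\pgldmrpga$ re-wraps it via the clause $\phi_j(\ajmp{l}) = \fjmp{k+2-(j-l)}$, lands on the instruction with the same $\textr{\ph}$-behaviour. This is a finite, purely syntactic calculation, but it is the place where the argument could go wrong, so I would carry it out carefully — probably by induction on the position $j$ and a case split on whether $j+l \le k$ — while treating the remaining cases ($\halt$, plain basic instructions, test instructions, multiple-reply test instructions, which $\phi_j$ leaves untouched) as routine since $\phi_j$ acts as the identity on them and $\textr{\ph}$ is defined uniformly.
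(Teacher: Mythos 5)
Your overall strategy --- normalise the closed \PGAmr\ term to the canonical form $P$ or $P \conc Q\rep$ with $P,Q$ repetition-free, then hand-build a \PGLDmr\ program whose forward jumps are re-expressed as absolute jumps, wrapping jumps that leave the repeating block back to the corresponding earlier position --- is a genuinely different route from the paper's. The paper constructs nothing: it observes that the composition of the embeddings defined in~\cite{BL02a} already yields, for each closed \PGA\ term, a \PGLD\ program with the same extracted thread, and that those embeddings rewrite only jump and termination instructions, so they extend verbatim to multiple-reply test instructions. Your construction is essentially an unpacking of what those embeddings do; it buys self-containedness at the price of exactly the seam and offset bookkeeping you identify.

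There is, however, one step in your plan that would fail as stated. In step (3) you propose to verify that $\pgldmrpga(P')$ is \emph{instruction-sequence equivalent} to the canonical term, and in step (4) to conclude because $\textr{\ph}$ respects that equivalence. But instruction-sequence equivalence is equality of the denoted sequences of primitive instructions (provable from PGA1--PGA4), and the two sequences genuinely differ: $\pgldmrpga(P')$ has the whole program, prefix included, plus the $\halt \conc \halt$ padding as its period, and its wrap-around jumps $\fjmp{k+2-(j-l)}$ carry different offsets from the original forward jumps. For instance, if the original term is $a\rep$ then every instruction in it is $a$, whereas every term of the form $\pgldmrpga(P')$ contains infinitely many occurrences of $\halt$; no instance of PGA1--PGA4 identifies these. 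So the conclusion cannot be reached at the level of instruction sequences. What you actually need is equality of the \emph{extracted threads}: show that the threads obtained from corresponding positions of the two instruction sequences satisfy one and the same finite linear guarded recursive specification (one equation per position, modulo the period) and invoke RSP, i.e., uniqueness of solutions in \BTApcs. That argument is routine but it lives on the thread side, not inside \PGA, and it is the missing ingredient. A smaller slip in the same vein: running off the end of a finite \PGAmr\ term yields $\DeadEnd$ under $\textr{\ph}$, whereas running off the end of a \PGLDmr\ program yields termination, so the finite case needs an explicit deadlocking device (e.g.\ a self-targeting $\ajmp{l}$) rather than the identification of the two end-of-sequence behaviours that your plan suggests.
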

\begin{proof}
In~\cite{BL02a}, a number of functions (called embeddings in that paper)
are defined, whose composition gives, for each closed \PGA\ term $P$, a
\PGLD\ program $P'$ such that $\textr{P} =\textr{\pgldpga(P')}$, where
$\pgldpga$ is the restriction of $\pgldmrpga$ to \PGLD\ programs.
The extensions of the above-mentioned embeddings to cover multiple-reply
test instructions are trivial because the embeddings change only jump
and termination instructions.
\qed
\end{proof}

Below, we will write \PGLDmrpc\ for the version of \PGLDmr\ in which the
additional assumptions relating to $\BInstr$ mentioned in
Section~\ref{subsect-pc-tsi} are made.
As a corollary of Theorem~\ref{theorem-expressiveness} and
Proposition~\ref{prop-express-PGLDmr}, we have that all regular
processes over $\AAct$ can be produced by \PGLDmrpc\ programs.
\begin{corollary}
\label{corollary-express-PGLDmr}
For each process $p$ that is regular over $\AAct$, there exists a
\PGLDmrpc\ program $P$ such that $p$ is the process denoted by
$\pextr{\textr{\pgldmrpga(P)}}$.
\end{corollary}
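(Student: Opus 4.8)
The plan is to chain together two results that the excerpt already provides. First, by Theorem~\ref{theorem-expressiveness}, for each process $p$ that is regular over $\AAct$ there exists a closed \PGAmrpc\ term $P_0$ such that $p$ is the process denoted by $\pextr{\textr{P_0}}$. Second, by Proposition~\ref{prop-express-PGLDmr}, for that closed \PGAmrpc\ term $P_0$ there exists a \PGLDmr\ program $P$ with $\textr{P_0} = \textr{\pgldmrpga(P)}$. Composing these, $p$ is the process denoted by $\pextr{\textr{\pgldmrpga(P)}}$, which is precisely the claim.

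The one point requiring a little care is that Proposition~\ref{prop-express-PGLDmr} is stated for \PGAmr\ and \PGLDmr, whereas the corollary concerns their counterparts \PGAmrpc\ and \PGLDmrpc\ in which the additional assumptions on $\BInstr$ from Section~\ref{subsect-pc-tsi} are made. I would observe that \PGLDmrpc\ is by definition just \PGLDmr\ with a particular choice of the fixed set $\BInstr$ of basic instructions, and likewise for \PGAmrpc\ relative to \PGAmr; the function $\pgldmrpga$, the embeddings used in the proof of Proposition~\ref{prop-express-PGLDmr}, and the thread extraction operation all act uniformly in $\BInstr$ and only rearrange jump and termination instructions, leaving basic instructions untouched. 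Hence Proposition~\ref{prop-express-PGLDmr} applies verbatim with $\BInstr$ taken to be the set of process construction instructions and program-service interaction instructions, so the \PGLDmr\ program $P$ produced from $P_0$ is in fact a \PGLDmrpc\ program.

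So the proof is a short one: invoke Theorem~\ref{theorem-expressiveness} to get $P_0$, invoke Proposition~\ref{prop-express-PGLDmr} (read over \PGAmrpc/\PGLDmrpc) to get $P$, and conclude $p = \pextr{\textr{\pgldmrpga(P)}}$. There is no genuine obstacle; the only thing to get right is the bookkeeping remark that passing to the \texttt{pc} versions changes nothing because everything in sight is parametric in $\BInstr$. I would write it as follows.

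\begin{proof}
Let $p$ be a process that is regular over $\AAct$. By Theorem~\ref{theorem-expressiveness}, there exists a closed \PGAmrpc\ term $P_0$ such that $p$ is the process denoted by $\pextr{\textr{P_0}}$. Now \PGAmrpc\ is \PGAmr\ with $\BInstr$ taken to be the set consisting of the process construction instructions and the program-service interaction instructions, and \PGLDmrpc\ is \PGLDmr\ with the same choice of $\BInstr$. The function $\pgldmrpga$, the embeddings used in the proof of Proposition~\ref{prop-express-PGLDmr}, and the thread extraction operation all act uniformly in $\BInstr$: they only rearrange jump and termination instructions and leave basic instructions untouched. Hence Proposition~\ref{prop-express-PGLDmr} applies with this choice of $\BInstr$, yielding a \PGLDmrpc\ program $P$ such that $\textr{P_0} = \textr{\pgldmrpga(P)}$. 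Consequently $\pextr{\textr{P_0}} = \pextr{\textr{\pgldmrpga(P)}}$, so $p$ is the process denoted by $\pextr{\textr{\pgldmrpga(P)}}$.
\qed
\end{proof}
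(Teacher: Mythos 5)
Your proposal is correct and matches the paper's own justification exactly: the paper derives this corollary by chaining Theorem~\ref{theorem-expressiveness} with Proposition~\ref{prop-express-PGLDmr}, which is precisely what you do. Your additional bookkeeping remark about the \texttt{pc} versions being instances of \PGAmr/\PGLDmr\ for a particular choice of $\BInstr$ is a reasonable elaboration of a point the paper leaves implicit.
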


\subsection{\PGLDmr\ Programs Acting on Boolean Registers}
\label{subsect-PGLDmr-BR}

First, we describe services that make up Boolean registers.

A Boolean register service accepts the following methods:
\begin{itemize}
\item
a \emph{set to true method} $\setbr{\True}$;
\item
a \emph{set to false method} $\setbr{\False}$;
\item
a \emph{get method} $\getbr$.
\end{itemize}
We write $\Methbr$ for the set
$\set{\setbr{\True},\setbr{\False},\getbr}$.
It is assumed that $\Methbr \subseteq \Meth$.

The methods accepted by Boolean register services can be explained as
follows:
\begin{itemize}
\item
$\setbr{\True}$\,:
the contents of the Boolean register becomes $\True$ and the reply is
$\True$;
\item
$\setbr{\False}$\,:
the contents of the Boolean register becomes $\False$ and the reply is
$\False$;
\item
$\getbr$\,:
nothing changes and the reply is the contents of the Boolean register.
\end{itemize}

Let $s \in \set{\True,\False,\Blocked}$.
Then the \emph{Boolean register service} with initial state $s$, written
$\BR_s$, is the service $\tup{\set{\True,\False,\Blocked},\eff,\eff,s}$,
where the function $\eff$ is defined as follows
($b \in \set{\True,\False}$):
\begin{ldispl}
\begin{geqns}
\eff(\setbr{\True},b) = \True\;,\;
\\
\eff(\setbr{\False},b) = \False\;,
\\
\eff(\getbr,b) = b\;,
\end{geqns}
\qquad\qquad
\begin{geqns}
\eff(m,b) = \Blocked & \mif m \not\in \Methbr\;,
\\
\eff(m,\Blocked) = \Blocked\;.
\end{geqns}
\end{ldispl}
Notice that the effect and yield functions of a Boolean register service
are the same.

We have that, by making use of Boolean registers, \PGLDmrpc\ programs in
which no atomic action from $\AAct$ occurs more than once can produce
all regular processes over $\AAct$.
\begin{theorem}
\label{corollary-express-BR}
For each process $p$ that is regular over $\AAct$, there exists a
\PGLDmrpc\ program $P$ in which each atomic action from $\AAct$ occurs
no more than once such that $p$ is the process denoted by
$\pextr{( \ldots
         (\textr{\pgldmrpga(P)}
           \useop{\br{1}} \BR_\False) \ldots
           \useop{\br{k}} \BR_\False)}$,
where $k$ is the length of $P$.
\end{theorem}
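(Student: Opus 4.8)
The plan is to follow the pattern of the proofs of Theorems~\ref{theorem-expressiveness} and~\ref{theorem-completeness}: reduce the statement to one about finite linear recursive specifications, then exhibit a concrete program and verify the process extraction by calculation. By Proposition~\ref{prop-lin-rec-process}, $p$ is the solution for some variable $X_1$ of a finite linear recursive specification $E$ over \ACPt\ whose equations have the shape $X_i = e_{i1}\seqc X_{i1} \altc \ldots \altc e_{ik_i}\seqc X_{ik_i} \altc e'_{i1} \altc \ldots \altc e'_{il_i}$ (for $i \in \set{1,\ldots,n}$) in which only atomic actions from $\AAct$ occur; let $a_1,\ldots,a_r$ be the distinct atomic actions occurring in $E$. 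By Propositions~\ref{prop-lin-rec-thread} and~\ref{prop-expressiveness} and the definition of $\textr{\ph}$, it then suffices to construct a \PGLDmrpc\ program $P$, of length $k$ say, in which each of $a_1,\ldots,a_r$ occurs in a single process construction instruction, such that $\pextr{(\ldots(\textr{\pgldmrpga(P)} \useop{\br{1}} \BR_\False) \ldots \useop{\br{k}} \BR_\False)} = \rec{X_1}{E}$. Taking as many registers as $P$ has instructions creates no circularity: a register whose focus never occurs in the thread is harmless, since applying the corresponding use operator leaves the thread unchanged (a routine consequence of the use axioms together with \AIP), so the construction may use only as few registers as it needs and pad with unused ones.

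For the construction I would make $P$ a single loop. A block of the registers $\br{1},\ldots,\br{k}$ holds, in binary, the index $i$ of the process state $X_i$ currently being simulated, together with a few auxiliary flags; since all registers start at $\False$, a short prologue sets them to the code of $X_1$. Each pass through the loop starts with a dispatch header that reads the state registers (via $\getbr$ methods, tested with multiple-reply test instructions) and jumps to the fragment for the current state. The fragment for $X_i$ carries out the non-deterministic choice over the menu $\set{e_{i1},\ldots,e_{ik_i},e'_{i1},\ldots,e'_{il_i}}$ of $X_i$ by a single multiple-reply process construction instruction, and then, according to the reply --- that is, according to which action was selected --- either writes the binary code of the successor state into the state registers, updates the flags, and jumps back to the dispatch header, or, when a terminal action $e'_{il}$ was selected, runs off the end of the program, which in \PGLDmr\ causes termination. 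The fragments are to be laid out so that the process construction instruction realising any given atomic action is written only once and is entered, with the appropriate register contents, from every state whose menu contains it; the extra Boolean registers --- this is where a linear number of them earns its keep --- store exactly the information needed to resume at the correct continuation after such a shared instruction. With this layout each of $a_1,\ldots,a_r$ occurs in exactly one process construction instruction, so no atomic action from $\AAct$ occurs in $P$ more than once.

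To verify correctness I would repeatedly apply the defining equations for $\textr{\ph}$ (Table~\ref{axioms-thread-extr}) and the axioms U1--U14 for the use operators (Table~\ref{axioms-use}) --- discharging the semantic side conditions of U5, U6 and U11 by computing $\BR_\False(m)$ and $\BR_\True(m)$ for the relevant methods $m$ and their derived registers --- so as to rewrite $(\ldots(\textr{\pgldmrpga(P)} \useop{\br{1}} \BR_\False) \ldots \useop{\br{k}} \BR_\False)$ as the solution of a finite guarded, indeed linear, recursive specification over \BTApcspc\ whose variables are indexed by (state, flags, program counter). Applying the process extraction equations (Tables~\ref{eqns-process-extr} and~\ref{eqns-process-extr-add}) turns this into a guarded recursive specification over \ACPt; after contracting the internal steps introduced by the register operations (each $\Tau$ produced by U5 or U6 becomes $\iact\seqc\iact$ under $\cpextr{\ph}$, and these are absorbed, occurring between visible actions inside the encapsulation-and-renaming context of $\cpextr{\use{\ph}{\ph}{\ph}}$) the result is, up to the auxiliary variables, exactly $E$, whence by RSP (or \AIP) its solution is $\rec{X_1}{E}$. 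I expect the main obstacle to be this second half: designing the fragment layout so that the single-occurrence requirement genuinely holds while the register-based routing still hands control to the correct continuation in every state, and then establishing in the calculation that every register-bookkeeping step is absorbed, so that the extracted specification is $E$ itself rather than an internally padded variant of it.
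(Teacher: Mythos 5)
Your reduction to linear recursive specifications, the register-padding remark, and the idea of using Boolean registers to route control after a shared instruction are all in the right spirit, but the core of your construction --- one $n$-ary process construction instruction per state, realising that state's entire menu, combined with the requirement that each atomic action be written only once --- cannot be made to work. Consider a specification with $X_1 = a \seqc X_1 \altc b \seqc X_2$ and $X_2 = a \seqc X_2 \altc c \seqc X_1$. The fragment simulating $X_1$ must, at its decision point, execute a single instruction whose extracted process offers exactly the alternatives $a$ and $b$; the fragment for $X_2$ needs one offering exactly $a$ and $c$. These are necessarily distinct process construction instructions, both containing $a$, so $a$ occurs twice; and register-based routing cannot repair a shared instruction such as $\ac(a,b,c)$, because the unwanted alternative is already a visible summand of its extracted process --- a branch cannot be masked after its atomic action has been performed. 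So whenever two states have overlapping but unequal menus, your layout violates the single-occurrence requirement; this is precisely the ``main obstacle'' you flag, and it is not a matter of clever layout but a genuine impossibility for your architecture. The paper takes a different route: it reduces to the thread level and invokes the construction of Theorem~\ref{theorem-completeness}, in which every $n$-ary choice is serialised into a chain of \emph{binary} instructions $\ac(e,\tact)$ (the $\tact$-branch meaning ``offer the next alternative''), so that each atomic action $e$ inhabits a single two-action instruction shareable by every state offering $e$. It then starts from the canonical three-instructions-per-state program $P'$ (one $\ptst{a_i}$ plus two jumps per state) and merges duplicate occurrences of the same basic instruction, using $n$ Boolean registers to record which occurrence is being simulated. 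The price of that route is CFAR and abstraction from $\tact$, which your reduction --- aiming at $\pextr{\textr{\pgldmrpga(P)}} = \rec{X_1}{E}$ on the nose --- leaves no room for.

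A secondary point: your verification asserts that the $\iact \seqc \iact$ pairs produced by the register interactions ``are absorbed''. Nothing in the process extraction abstracts from $\iact$ (only $\stp$ is abstracted away), so no such absorption is available as the definitions stand. The paper's own proof is equally silent on this, so you are not worse off here, but you should not present it as a routine step of the calculation.
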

\begin{proof}
By the proof of Theorem~\ref{theorem-completeness} given in
Section~\ref{sect-expressiveness}, it is sufficient to show that, for
each thread $p$ that is regular over $\BAct$, there exist a \PGLDmr\
program $P$ in which each atomic action from $\BAct$ occurs no more than
once and a $k \in \Natpos$ such that $p$ is the thread denoted by
$( \ldots
         (\textr{\pgldmrpga(P)}
           \useop{\br{1}} \BR_\False) \ldots
           \useop{\br{k}} \BR_\False)$.

Let $p$ be a thread that is regular over $\BAct$.
We may assume that $p$ is produced by a \PGLDmr\ program $P'$ of the
following form:
\begin{ldispl}
\ptst{a_1} \conc
 \ajmp{(3 \mul k_1 + 1)} \conc \ajmp{(3 \mul k'_1 + 1)} \conc {}
\\
\quad \vdots
\\
\ptst{a_n} \conc
 \ajmp{(3 \mul k_n + 1)} \conc \ajmp{(3 \mul k'_n + 1)} \conc {}
\\
\ajmp{0} \conc \ajmp{0} \conc \ajmp{0} \conc \ajmp{(3 \mul n + 4)}\;,
\end{ldispl}
where, for each $i \in [1,n]$, $k_i, k'_i \in [0,n - 1]$
(cf.\ the proof of Proposition~2 from~\cite{PZ06a}).
It is easy to see that the \PGLDmr\ program $P$ that we are looking for
can be obtained by transforming $P'$: by making use of $n$ Boolean
registers, $P$ can distinguish between different occurrences of the same
basic instruction in $P'$, and in that way simulate $P'$.
\qed
\end{proof}

\section{Conclusions}
\label{sect-concl}

Because process algebra is considered relevant to computer science,
there must be programmed systems whose behaviours are taken for
processes as considered in process algebra.
In that light, we have investigated the connections between programs and
the processes that they produce, starting from the perception of a
program as a single-pass instruction sequence.
We have shown that, by apposite choice of basic instructions, all
regular processes can be produced by single-pass instruction sequences
as considered in program algebra.

We have also made precise what processes are produced by threads that
make use of services.
The reason for this is that single-pass instruction sequences under
execution are regular threads and regular threads that make use of
services such as unbounded counters, unbounded stacks or Turing tapes
may produce non-regular processes.
An option for future work is to characterize the classes of processes
that can be produced by single-pass instruction sequences that make use
of such services.

\bibliographystyle{spmpsci}
\bibliography{TA}


\end{document}